\newtheorem{definition}{\emph{\underline{Definition}}}
\newtheorem{lemma}{\emph{\underline{Lemma}}}
\newtheorem{proposition}{\emph{\underline{Proposition}}}
\newtheorem{example}{\emph{\underline{Example}}}
\newtheorem{remark}{\bf \emph{\underline{Remark}}}
\def\l{\left}
\def\r{\right}
\def\({\left(}
\def\){\right)}
\def\b0{{\mathbf{0}}}
\begin{document}
	%\captionsetup[figure]{name={Fig.},labelsep=period,singlelinecheck=off}  
	\captionsetup[figure]{name={Fig.}} 
	\title{MA-enhanced Mixed Near-field and Far-field Covert Communications}
	\author{Chao  Zhou, 
		Changsheng~You,~\IEEEmembership{Member,~IEEE},  Cong Zhou, Hai Lin,~\IEEEmembership{Senior Member,~IEEE},\\ and  Yi Gong,~\IEEEmembership{Senior Member,~IEEE}
		
		\thanks{
        %All the authors are 
        Chao Zhou, Changsheng You,  Cong Zhou, and Yi Gong are with the Department of Electronic and Electrical Engineering, Southern University of Science and Technology (SUSTech), Shenzhen
		518055, China (e-mail: zhouchao2024@mail.sustech.edu.cn, youcs@sustech.edu.cn, zhoucong@stu.hit.edu.cn, and gongy@sustech.edu.cn). 
			% Cong Zhou is with the Department of Electrical and Electronic Engineering, Southern University of Science and Technology, Shenzhen 518055, China and also with the School of Electronics and Information Engineering, Harbin Institute of Technology, Harbin, 150001, China (e-mail: zhoucong@stu.hit.edu.cn
            % ).
            
        Hai Lin is with the Graduate School of Engineering, Osaka Metropolitan University, Sakai, Osaka 599-8531, Japan (e-mail: hai.lin@ieee.org).
            %Y. Zeng is with the National Mobile Communications Research Laboratory, Southeast University, Nanjing 210096, China, also with the Purple Mountain Laboratories, Nanjing 211111, China (e-mail: yong$\_$zeng@seu.edu.cn).
            
            \emph{Corresponding author: Changsheng You.}  
		}\vspace{-14pt}} 

	\maketitle
	\begin{abstract}
In this paper, we propose to employ a modular-based movable {\it{extremely large-scale array}} (XL-array) at Alice for enhancing covert communication performance. Compared with existing work that mostly considered either far-field or near-field covert communications, we consider in this paper a more general and practical {\it{mixed-field}} scenario, where multiple Bobs are located in either the near-field or far-field of Alice, in the presence of multiple near-field Willies. Specifically, we first consider a two-Bob-one-Willie system and show that conventional fixed-position XL-arrays suffer degraded sum-rate performance due to the {\it{energy-spread effect}} in mixed-field systems, which, however, can be greatly improved by subarray movement. On the other hand, for transmission covertness, it is revealed that sufficient angle difference between far-field Bob and Willie as well as adequate range difference between near-field Bob and Willie are necessary
for ensuring covertness in fixed-position XL-array systems, while this requirement can be relaxed in movable XL-array systems thanks to flexible channel correlation control between Bobs and Willie. 
Next, for general system setups, we formulate an optimization problem to maximize the achievable sum-rate under covertness constraint. To solve this non-convex optimization problem, we first decompose it into two subproblems, corresponding to an inner problem for beamforming optimization given positions of subarrays and an outer
problem for subarray movement optimization. Although these two subproblems are still non-convex, we obtain their high-quality solutions by using the successive convex approximation technique and devising a customized differential evolution algorithm, respectively. 
Last, numerical results demonstrate the effectiveness of proposed movable XL-array in balancing sum-rate and covert communication requirements, as compared to various benchmark schemes.
	\end{abstract}
	\begin{IEEEkeywords}
		Movable extremely large-scale array, covert communication, mixed near-field and far-field.
	\end{IEEEkeywords}
\vspace{-6pt}    
\section{Introduction}
Extremely large-scale arrays (XL-arrays) have emerged as a promising and advanced technology to greatly enhance the spectral efficiency and spatial resolution of future wireless networks \cite{XLMIMO_tutorial,LiuTSBeam}. In particular, the huge number of antennas fundamentally alters the electromagnetic propagation characteristics, shifting from conventional far-field planar wavefronts to the  \emph{near-field} \emph{spherical} wavefronts~\cite{XLMIMO_tutorial}.
Such spherical wavefronts enable near-field \emph{beam-focusing}, for which the beam energy of XL-arrays can be concentrated at/around a targeted region instead of beam steering at certain angles. In addition, apart from the angle-domain beam manipulation, near-field beam-focusing allows more flexible beam control in both the angle and range domains, thus greatly enhancing the performance of various wireless applications, such as wireless power transfer~\cite{Zhangyp_swipt}, physical layer security (PLS)~\cite{LiuyuanweiPLS}, wireless sensing~\cite{LiuyuanweiSensing}, etc. In this paper, we study covert communication designs under a general and practical setup, namely, \emph{mixed} near-field and far-field communications. We show that the mixed-field channel correlation generally results in degraded covert rate performance, which, however, can be greatly improved by exploiting movable antenna (MA) technology.  

\vspace{-8pt}
\subsection{Prior Work}

\subsubsection{Covert communications} Covert communications have received upsurging research interests recently, driven by the practical needs for privacy/security protection. Compared with PLS that aims to avoid confidential information being decoded by eavesdroppers, covert communications target a more ambitious goal, that is, hiding communication behaviors in wireless channels and making data transmissions undetectable by unintended users \cite{chen2023covert}. For transceiver designs, most  existing work on covert communications considered either the far-field or near-field systems. Specifically, for far-field covert communications, multiple-input multiple-output (MIMO) technology has been widely applied to form pencil-like beams based on acquired channel state information (CSI), hence effectively reducing signal power leakage and improving transmission covertness~\cite{Multi-antCovert}.  In addition, the authors in \cite{Muiti-antenna_CC} proposed to design an interference beam (i.e., artificial noise (AN)) at the jammer towards Willie so as to confuse data detection at Willie, while ensuring the legitimate user (Bob) located in its null space. However, these methods may not be efficient when there is strong channel correlation between Bobs and Willies, e.g., locating at the same angle. To tackle this difficulty, one promising approach is leveraging intelligent reflecting surfaces (IRSs) for smartly controlling signal reflections, so as to enhance received power at intended Bobs as well as reduce signal power at undesired Willies \cite{NgIRS_covert}. 
Alternatively, reconfigurable antenna technologies, such as MAs~\cite{Zhu_MA_mag,YouNGAT,ShaoJSAC_6DMA,ShaoSurvey}, rotatable antennas~(RAs)\cite{zheng2025rotatable}, and fluid antennas (FAs)~\cite{WongFAS}, offer another option to address this issue effectively. Specifically, by leveraging their unique spatial degrees-of-freedom (DoFs), these technologies have shown significant potential in reducing spatial channel correlation to achieve covertness. For example, the authors in~\cite{MA_covert} revealed the capability of MAs to evade detection by multiple wardens in covert communication systems.

For near-field systems, the authors in~\cite{RIS_NFC_CC} proposed to exploit the near-field beam-focusing property for reducing power and information leakage to Willie, even when Bob and Willie are located at the same angle. This approach offers new opportunities for designing covert communications in the range domain. In addition, a vulnerable region for near-field covert communication systems was characterized in~\cite{NCC_CCregion},  which is determined by both the angle and range of the legitimate user. Moreover, by exploiting spherical wavefronts, the authors in~\cite{lotfi2024covertness} proposed to employ a frequency diverse array (FDA) in near-field covert communication systems to achieve flexible beampatterns, which enhances beam-focusing performance for the legitimate user while reducing power leakage towards Willie. This work was further extended  in~\cite{zhang2023robust}, where a robust beam-focusing design was developed for a  practical scenario accounting for uncertain user locations.
\subsubsection{Mixed-field wireless systems}
In practice, far-field or near-field users may not exist \emph{alone} in wireless systems. For example, when a base station (BS) equipped with 256 antennas operates at $30$ GHz frequency band, its Rayleigh distance (defined based on phase variations across antennas) is about 325 meters (m), while its \emph{effective Rayleigh distance}~\cite{Cui_ERaydis} (defined based on array gain variations) is about 120 m. Note that compared with classic Rayleigh distance, effective Rayleigh distance is generally more reasonable in evaluating rate performance~\cite{Cui_ERaydis}, since it is directly related to array gains. As such, for typical wireless systems with a cell radius of 200 m, there may exist both near-field and far-field users concurrently. For such mixed-field wireless systems, it was revealed in~\cite{Zhangyp_mixed} that the discrete Fourier transform (DFT) beams towards far-field users may introduce significant interference to near-field users, even when they are located at different angles. This is fundamentally due to the \emph{non-orthogonality} between near-field and far-field channel steering vectors, which leads to far-field power leakage in the near-field region, referred to as the \emph{energy-spread effect}~\cite{WuDFT}. This work was further extended in~\cite{Zhangyp_swipt}, where the authors leveraged far-field beams to power near-field energy harvesting receivers by exploiting the energy-spread effect. Moreover, similar mixed-field setups were considered in~\cite{zhou2025mixed,LichenmixedISAC} for a variety of applications such as mixed-field target localization~\cite{zhou2025mixed} and mixed-field integrated sensing and communications (ISAC)~\cite{LichenmixedISAC}. Nevertheless, for covert communications, mixed-field correlation may affect both the achievable rate (e.g., between near-field and far-field Bobs) and transmission covertness (e.g., between near-field Willie and far-field Bobs). How to balance these two performance in mixed-field channels remains an open problem.

\vspace{-6pt}
\subsection{Motivations and Contributions}
Motivated by the above, we consider in this paper a practical mixed-field covert communication system as shown in Fig.~\ref{Fig:System_Model}, where a multi-antenna Alice transmits covert information to multiple near-field and far-field Bobs, in the presence of multiple near-field Willies.  To enhance covert communication performance, we propose to employ a \emph{modular}-based \emph{movable} XL-array~\cite{YZmodular1}, whose antennas are divided into a set of movable modules/subarrays. Note that conventional antenna-based MA usually faces practical challenges such as demanding hardware complexity to control the movement of all antennas and high computational complexity in the algorithm designs. In contrast, the considered movable subarrays (or array-based MA) greatly reduce both the hardware and computational complexity. Moreover, as compared to conventional XL-arrays without antenna movement (named as \emph{fixed-position} XL-array, shortened as fixed XL-array), the considered movable XL-array provides an additional spatial DoF to deal with the energy-spread effect.  

To our best knowledge, this work represents the \emph{first} attempt to study covert communication designs under mixed-field channels. The main contributions of this paper are summarized as follows.
\begin{itemize}
\item First, to shed useful insights, we consider a typical system consisting of a near-field Bob, a far-field Bob, and a near-field Willie. Interestingly, we show that the mixed-field energy-spread effect significantly reduces the achievable sum-rate of fixed XL-array systems, while movable subarrays can effectively reduce the near-far channel correlation and hence substantially improve achievable sum-rate. In addition, for transmission covertness, we characterize the covert transmission region for far-field Bob in the angle domain and that for near-field Bob in the range domain, respectively. It is revealed that sufficient angle difference between far-field Bob and Willie as well as adequate range difference between near-field Bob and Willie are necessary for ensuring covertness in fixed XL-array systems. However, this issue can be greatly addressed by subarray movement optimization in movable XL-array systems, which effectively reconfigure mixed-field channel correlation.

\item Next, for general system setups, we formulate an optimization problem to maximize the achievable sum-rate under the covertness constraint. To solve this non-convex optimization problem, we decompose it into two subproblems, namely, an inner problem for beamforming optimization given positions of subarrays and an outer problem for subarray movement optimization. Then, the inner problem is efficiently solved by using the successive convex approximation (SCA) technique and the outer problem is solved by a customized differential evolution (DE) algorithm.  
\item 
Last, numerical results are presented to demonstrate the effectiveness of the proposed movable XL-array in enhancing mixed-field covert communication performance.
It is shown that the movable XL-array significantly reduces inter-user interference caused by the mixed-field energy-spread effect, leading to noticeable improvement in the achievable sum-rate as compared to the fixed XL-array. 
In addition, by deploying a movable XL-array, the strong mixed-field channel correlations between Bobs and Willies are effectively suppressed, hence leading to a substantial enhancement in covert rate performance.
\end{itemize}

 %   \subsection{Organization and Notations}
 
%    \emph{Organization:} 
%    The remainder of this paper is organized as follows. Section~\ref{Sec2:label} presents the system model of MA-enhanced mixed-field covert communication systems. In Section~\ref{Sec3}, we analyze the achievable sum-rate and covertness for both fixed XL-array and movable XL-array systems. Then, we formulate and solve an optimization problem for covert rate maximization in Section~\ref{Sec:IV}, followed by the numerical results in Section~\ref{Sec:Sim}. Finally, we make the conclusion in~\ref{Sec:Con}.
	
    \emph{Notations:} 
       We use lowercase boldface letters (e.g., $\mathbf{a}$) to denote vectors, uppercase boldface letters (e.g., $\mathbf{A}$) for matrices, and calligraphic letters (e.g., $\mathcal{M}$) to represent sets. The superscripts $(\cdot)^H$ and $(\cdot)^T$  indicate Hermitian (conjugate) transpose and regular transpose operations, respectively.  The union operation is denoted by $\cup$. The notation $|\cdot|$ represents the absolute value of a scalar and the cardinality of a set. The notation $||\cdot||_F$ represents the Frobenius norm of a matrix. Complex Gaussian distribution is denoted by $\mathcal{CN}(\mu,\sigma^2)$ and $\mathcal{O}(\cdot)$ denotes the standard big-O notation. 
  \vspace{-0.3cm}
  \section{System Model}\label{Sec2:label}
	We consider an MA-enhanced mixed-field covert communication system as shown in Fig.~\ref{Fig:System_Model}, where a multi-antenna Alice (or BS) transmits covert information to $B$ single-antenna Bobs (denoted by $\mathcal{B}\triangleq\{1,2,\ldots,B\}$), in the presence of $W$ single-antenna Willies (denoted by $\mathcal{W}\triangleq\{1,2,\ldots,W\}$).  Specifically, Alice is equipped with a modular-based movable XL-array, for which the XL-array of $N$ antennas (denoted  by $\mathcal{N} \triangleq \{1, 2, \ldots, N\}$) is divided into $M$ movable subarrays (denoted by $\mathcal{M} \triangleq \{1, 2, \ldots, M\}$), each consisting of $\tilde{N}=N/M$ antennas. Moreover, we consider a practical and challenging \emph{mixed-field} scenario, where Bobs are located in either the near-field or far-field of Alice, while Willies are located in the near-field for monitoring communication signals at short distances. 
	
	\begin{figure}[t]
		\centering
		\includegraphics[width=0.48\textwidth]{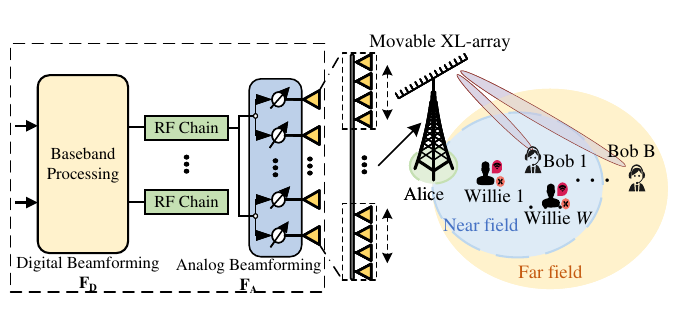}
		\caption{The considered modular-based MA-enhanced mixed-field covert communication system.} \label{Fig:System_Model}
		\vspace{-20pt}
	\end{figure}
 \vspace{-6pt}   
\subsection{XL-array Model}
For the considered modular-based movable XL-array, the inter-antenna spacing in each subarray is half-wavelength, while the position of each subarray can be flexibly controlled to enhance covert communication performance. Without loss of generality, the movable XL-array is positioned at the ${y}$-axis. Let $\mathbf{q}=[{q}_{1},{q}_{2},\ldots,{q}_{M}]^T \in \mathbb{R}^{M \times 1}$ denote the position vector of $M$ subarrays, where ${q}_{m}, \forall m\in \mathcal{M}$ is the central position of subarray $m$. Given half-wavelength inter-antenna spacing in each subarray, the position of the $n$-th antenna of the XL-array (or the $\tilde{n}$-th antenna of subarray $m$ with $n = \tilde{n} + (m - 1)\tilde{N}$) can be expressed as   
\begin{align}\label{Exp:antenna_position}
		\hat{q}_{n} = {q}_{m} + \frac{2\tilde{n}-\tilde{N}-1}{2} d, \forall n\in \mathcal{N}.
	\end{align}
Moreover, the central positions of subarrays $i$ and $j$ satisfy\footnote{In this paper, to characterize the performance upper bound, we consider the continuous MA array movement, while the proposed design can be further extended to the case of discrete movement by using e.g., mapping and rounding methods. In addition, in practice, the movement resolution of MA at $27.5$ GHz frequency band can achieve  $0.05\lambda$, leading to comparable rate performance with the continuous case~\cite{dong2024movable}.} $$|{q}_{i} - {q}_{j}| \geq \tilde{N}d,~~~ 1 \leq i \neq j \leq M.$$  In particular, when $|{q}_{m+1}-{q}_{m}| = \tilde{N}d, \forall m \in \mathcal{M}$, the movable XL-array reduces to the fixed XL-array, for which the inter-antenna spacing of adjacent antennas is half-wavelength. For fixed XL-array, we denote by $\mathbf{q} =  \mathbf{q}_{\rm U}$ the central positions of its subarrays. 
	
\vspace{-6pt}	
\subsection{Channel Models}
\vspace{-2pt}	
For the boundary between the near-field and far-field region, we adopt the \emph{effective Rayleigh distance} in this paper, which is defined as $Z(\theta)  = \nu \frac{2D^2(1-\theta^2)}{\lambda}$ with $\nu = 0.37$~\cite{Cui_ERaydis} and $D$ denoting the XL-array aperture.
It is worthy noting that compared with the classic Rayleigh distance obtained based on phase variations, the effective $Z(\theta)$ characterizes the distance boundary where the array gains obtained from the far-field and near-field channel models exhibit notable differences. 
	
For both near-field and far-field Bobs, we adopt the generic near-field spherical wavefront multi-path channel model. Let $\mathbf{h}^H_{{\rm B},b}(\mathbf{q}) \in  \mathbb{C}^{1 \times N}$ denote the channel from Alice to Bob $b$, which is modeled as 	\begin{align}\label{Exp:Bobchannel}
		\mathbf{h}^H_{{\rm B},b}(\mathbf{q}) &= \sqrt{N} g_{{\rm B},b} \mathbf{a}^H(\mathbf{q},\theta_{{\rm B},b}, r_{{\rm B},b}) \nonumber \\
		& +\sqrt{\frac{N}{L_{b}}}\sum_{\ell_{b}=1}^{L_{b}}  g_{{\rm B},b}^{(\ell_{b})} \mathbf{a}^H(\mathbf{q},\theta_{{\rm B},b}^{(\ell_{b})}, r_{{\rm B},b}^{(\ell_{b})} ), \forall b \in \mathcal{B},
	\end{align}
	which consists of one line-of-sight (LoS) path and $L_{b}$ non-LoS (NLoS) paths. Additionally, $g_{{\rm B},b}$ and $ g_{{\rm B},b}^{(\ell_{b})} $ are the complex-valued path gains of the LoS path and the $\ell_{b}$-th NLoS path with $ \{\theta_{{\rm B},b},r_{{\rm B},b}\} $ and $\{\theta_{{\rm B},b}^{(\ell_{b})},r_{{\rm B},b}^{(\ell_{b})}\}$ denoting their corresponding spatial angles and ranges, respectively. Herein, the channel steering vector (i.e., $\mathbf{a}({\mathbf{q},\theta,r})\in \mathbb{C}^{N \times 1 }$) is given by
	\begin{align}\label{Exp:steer_vec}
		\mathbf{a}(\mathbf{q},\theta, r) = \frac{1}{\sqrt{N}} [\mathbf{b}_{1}^T\left({q}_{1},\theta,r\right),
		\ldots,
		\mathbf{b}_{M}^T\left({q}_{M},\theta,r \right)]^T,
	\end{align}
	where $ \mathbf{b}_{m}\left({q}_{m},\theta,r \right) \in \mathbb{C}^{\tilde{N} \times 1} $ is the near-field channel steering vector of the $m$-th  subarray given $\{{q}_{m},\theta,r\}$ with $\theta \in [-1,1]$. Specifically,  $ \mathbf{b}_{m}\left({q}_{m},\theta,r \right) $ is given by
	\begin{align}\label{Exp:Sub_steer_vec}
		\mathbf{b}_{m}\left({q}_{m},\theta,r \right) = \Big[e^{-\jmath\frac{2\pi}{\lambda}({q}_{m,1} \theta - \frac{{q}_{m,1}^2 (1-\theta^2) }{2 r} )}, \ldots,\nonumber \\
		e^{-\jmath\frac{2\pi}{\lambda}({q}_{m,\tilde{N}} \theta - \frac{{q}_{m,\tilde{N}}^2 (1-\theta^2)}{2 r} )}\Big]^T.
	\end{align}
Note that for far-field Bobs, as $r>Z(\theta) $, the near-field channel steering vector in~\eqref{Exp:steer_vec} reduces to the far-field case, which is simply denoted as $\hat{\mathbf{a}} 
(\mathbf{q},\theta)$ by ignoring the quadratic phase term in~\eqref{Exp:Sub_steer_vec}.

We consider scenarios where both Bobs and Willies are licensed users in the system, while Alice suspects Willies as malicious users during data transmissions~\cite{RIS_NFC_CC}.\footnote{The proposed design also applies to the case where Willies are unlicensed users, while Alice can obtain the CSI of Willies by using e.g., advanced detecting equipment like ``Ghostbuster” introduced in~\cite{Chaman2018Ghostbuster}.}
%We consider scenarios in which both Bobs and Willies are licensed users in the system, with the latter being untrusted when Alice transmits to Bobs due to the fact that Alice suspects Willies as a malicious user. 
Moreover,  we assume that the CSI of both Bobs and Willies is available at Alice by using existing channel estimation methods (see, e.g., \cite{Cui2022CE,YouNGAT}).\footnote{For scenarios with imperfect CSI of Willies, we present the impact of CSI errors on covert transmission probability in Section~\ref{Sec:Sim}.
Note that for scenarios involving imperfect CSI of Willies and/or Bobs, the robust beamforming design can be adopted to address the CSI uncertainty~\cite{MyTWC}; however, this lies beyond the scope of this work and is left for our future work.}
Similarly, the channel from Alice to the $w$-th  Willie, denoted by $\mathbf{h}^H_{{\rm W},w}(\mathbf{q})\in \mathbb{C}^{1 \times N} $, is modeled as
	\begin{align}\label{Exp:Williechannel}
		\mathbf{h}^H_{{\rm W},w}(\mathbf{q}) &= \sqrt{N} g_{{\rm W},w} \mathbf{a}^H(\mathbf{q},\theta_{{\rm W},w}, r_{{\rm W},w}) \nonumber \\
		+& \sqrt{\frac{N}{L_{w}}}\sum_{\ell_{w}=1}^{L_{w}}  g_{{\rm W},w}^{(\ell_{w})} \mathbf{a}^H(\mathbf{q},\theta_{{\rm W},w}^{(\ell_{w})}, r_{{\rm W},w}^{(\ell_{w})} ), \forall w\in\mathcal{W},
	\end{align}
	where $ g_{{\rm W},w} $, $ \theta_{{\rm W},w} $, and $ r_{{\rm W},w} $ denote the complex-valued path gain, spatial angle, and range of the LoS path. Moreover, $ g_{{\rm W},w}^{(\ell_{w})} $, $ \theta_{{\rm W},w}^{(\ell_{w})} $, and $ r_{{\rm W},w}^{(\ell_{w})} $ are the complex-valued path gain, spatial angle, and range of the $\ell_{w}$-th NLoS path with $L_{w}$ representing the number of NLoS paths for Willie $w$.
    Additionally, $ \mathbf{a}(\mathbf{q},\theta_{{\rm W},w}, r_{{\rm W},w}) $ and $\mathbf{a}(\mathbf{q},\theta_{{\rm W},w}^{(\ell_{w})}, r_{{\rm W},w}^{(\ell_{w})} ) $ represent the steering vectors, which are given in~\eqref{Exp:steer_vec}.
	
    \vspace{-8pt}	
    \subsection{Signal Model}
	We consider a low-cost hybrid beamforming architecture for the movable XL-array, where $B$ ($ B \ll N $) radio frequency (RF) chains are equipped at the BS to serve $B$ Bobs. Let $ \mathbf{F}_{\rm A} \in \mathbb{C}^{ N \times B}$ and $ \mathbf{F}_{\rm D} \in \mathbb{C}^{B \times B}$ represent the analog and digital beamforming matrices, respectively. Based on the above, the received signal at the $b$-th Bob is given by 
	\begin{align}
		y_{{\rm B},b} = \mathbf{h}^H_{{\rm B},b}(\mathbf{q}) \mathbf{F}_{\rm A}  \mathbf{F}_{\rm D} \mathbf{s} + z_{{\rm B},b},~\forall b\in \mathcal{B},
	\end{align}
	where $\mathbf{s} = [s_{1},\ldots,s_{B}]^T \in \mathbb{C}^{B \times 1}$ is the transmitted symbol vector for the $B$ Bobs and $ z_{{\rm B},b} \sim \mathcal{CN}(0,\sigma_{{\rm B}}^2)$ is the additive white Gaussian noise (AWGN) at Bob $b$. As such, the achievable rate in bits/second/hertz (bps/Hz) at Bob $b$ can be expressed as 
	\begin{align}
		& R_{{\rm B},b}(\mathbf{q},\mathbf{F}_{\rm A},\mathbf{F}_{\rm D})  \nonumber  \\
		&= \log_2\l(  1 + \frac{ |\mathbf{h}^H_{{\rm B},b}(\mathbf{q}) \mathbf{F}_{\rm A} \mathbf{f}_{{\rm D},b}   |^2}{ \sum_{i=1,i\neq b}^{B} |\mathbf{h}^H_{{\rm B},b}(\mathbf{q}) \mathbf{F}_{\rm A} \mathbf{f}_{{\rm D},i}   |^2 + \sigma_{{\rm B}}^2  } \r) , \forall b\in \mathcal{B}, \nonumber
	\end{align}
	where $ \mathbf{f}_{{\rm D},b} $ is the $b$-th column of $\mathbf{F}_{\rm D}$.
    
\vspace{-6pt}	
\subsection{Hypothesis Test and Detection Performance}
For covert communications, each Willie aims to detect the existence of communications between Alice and Bobs based on the Neyman-Pearson test~\cite{chen2023covert} independently, which can be formulated as a binary hypothesis testing problem. Specifically, for each Willie $w$, its  received signal at time slot $t$ is given by 
	\begin{equation}
		y_{{\rm W},w}(t)  =\left\{
		\begin{aligned}
			&z_{{\rm W},w}(t), &&\mathcal{H}_{0}, \\
			&\mathbf{h}^H_{{\rm W},w}(\mathbf{q}) \mathbf{F}_{\rm A}  \mathbf{F}_{\rm D} \mathbf{s}
			+ z_{{\rm W},w}(t),&&\mathcal{H}_{1},
		\end{aligned}
		\right.
	\end{equation} 
	where $ z_{{\rm W},w}(t) \sim \mathcal{CN}(0,\sigma_{\rm w}^{2})$ is the AWGN at Willie $w$. As energy detection technique is utilized by Willies to determine covert transmissions between Alice and Bobs, noise uncertainty needs to be considered, which affects the covertness performance. Similar to~\cite{Si2021Covertcom,STAR_Covert}, we consider the bounded uncertainty model, based on which the probability density function (PDF) of noise power $\sigma_{\rm w}^2$ is given by
\begin{equation}\label{Exp:noise_PDF}
		f_{\sigma_{\rm w}^2}(x) = \left\{
		\begin{aligned}
			&\frac{1}{2\ln(\rho)x },   && \textrm{if}~\frac{1}{\rho} \tilde{\sigma}_{\rm w}^2\le x \le \rho \tilde{\sigma}_{\rm w}^2, \\
			&0,   && \textrm{otherwise},
		\end{aligned} 
		\right.
	\end{equation}
	where $ \rho \ge 1 $ represents the noise uncertainty level and $\tilde{\sigma}_{\rm w}^2$ denotes the nominal noise power~\cite{STAR_Covert}. Based on the above, the average power of received signal at Willie $w$ is given by
	\begin{align}
		\bar{P}_{{\rm W},w} &= \frac{1}{T}\sum_{t=1}^{T} |y_{{\rm W},w}(t)|^2 \nonumber \\
		&=\left\{ 
		\begin{aligned}
			&~{\sigma}_{\rm w}^2, 	&& \mathcal{H}_{0}, \\
			&~\sum_{b=1}^{B}|\mathbf{h}^H_{{\rm W},w}(\mathbf{q}) \mathbf{F}_{\rm A} \mathbf{f}_{{\rm D},b} |^2   + {\sigma}_{\rm w}^2,   && \mathcal{H}_{1}.
		\end{aligned}
		\right.
	\end{align}
	For each Willie $w$, the power detection threshold $\tau$ is properly set to determine the existence of covert transmission ($\mathcal{D}_{1}$) or not ($\mathcal{D}_{0}$),  which is given by
	\begin{align}
		\mathcal{D}_{0}:\bar{P}_{{\rm W},w} \le \tau,~~\text{and}~~
		\mathcal{D}_{1}:\bar{P}_{{\rm W},w} \ge \tau.  
	\end{align}
 In the above energy detection, two kinds of errors may occur, namely, false alarm (FA) and miss detection (MD), whose probabilities are respectively given~by 
	\begin{subequations}
		\begin{align}
			\epsilon_{{\rm FA},w}(\tau) & \triangleq \text{Pr}\{\mathcal{D}_{1}|\mathcal{H}_{0}\} = \text{Pr}\{ {\sigma}_{\rm w}^2 \ge \tau \},\\
			\epsilon_{{\rm MD},w}(\tau) &\triangleq \text{Pr}\{\mathcal{D}_{0}|\mathcal{H}_{1}\} 
			= \text{Pr}\{f_{{\rm Co},w} + {\sigma}_{\rm w}^2 \le \tau \},
		\end{align}
	\end{subequations}
	where $f_{{\rm Co},w} \triangleq \sum_{b=1}^{B}|\mathbf{h}^H_{{\rm W},w}(\mathbf{q}) \mathbf{F}_{\rm A} \mathbf{f}_{{\rm D},b} |^2$ represents the average received power of information-bearing signals at Willie $w$. As such, the total detection error probability (DEP) of Willie $w$ given parameter $\tau$ is 
	\begin{align}\label{Exp:DEP}
		\epsilon_{w}(\tau) = \epsilon_{{\rm FA},w}(\tau) + \epsilon_{{\rm MD},w}(\tau), \forall w\in \mathcal{W}.
	\end{align}
	Based on the first-order optimality condition~\cite{Si2021Covertcom},  the optimal detection threshold $\tau_{w}^{*}$ at Willie $w$ to minimize the DEP should be set as
	\begin{align}
		\tau_{w}^{*} =\min\Big\{  f_{{\rm Co},w} +  \frac{1}{\rho}\tilde{\sigma}_{\rm w}^2,\rho \tilde{\sigma}_{\rm w}^2 \Big\}, \forall w\in \mathcal{W}.
	\end{align}
By substituting $\tau_{w}^{*}$ and~\eqref{Exp:noise_PDF} into~\eqref{Exp:DEP}, the minimum DEP for Willie $w$ can be obtained as
	\begin{equation}\label{Exp:MDEP}
		\epsilon_{w}^{*} = \left\{
		\begin{aligned}
			&1-\frac{1}{2\ln(\rho)}\ln\!\left(\! 1+\frac{\rho f_{{\rm Co},w}}{\tilde{\sigma}_{\rm w}^2}\!\right) ,~\textrm{if}~f_{{\rm Co},w} \le \tilde{\sigma}_{\rm w}^2 (\rho - \frac{1}{\rho}), \\
			&0,~~~~~~~~~~~~~~~~~~~~~~~~~~~~~~~~~~\textrm{otherwise}.
		\end{aligned} 
		\right.
	\end{equation}
Let $ \varsigma \in [0, 1] $ denote a positive constant, which is employed to ensure covert communication requirement for the considered system. Specifically, covert communication is guaranteed only when the minimum DEPs of all  Willies satisfy $ \epsilon_{w}^{*} \geq 1 - \varsigma, \forall w \in \mathcal{W} $. This condition can be equivalently rewritten as
\begin{align}\label{Exp:CT_constraint}
		f_{{\rm Co},w}  
		&\le \min \left\lbrace \tilde{\sigma}_{\rm w}^2 \left( \rho - \frac{1}{\rho}\right) ,\tilde{\sigma}_{\rm w}^2 \left(   \frac{\rho^{2\varsigma} - 1}{\rho}\right)  \right\rbrace  \nonumber \\
		& = \tilde{\sigma}_{\rm w}^2 \left(   \frac{\rho^{2\varsigma} - 1}{\rho}\right) ,\forall w \in \mathcal{W}.
	\end{align} 
	Note that the inequality in \eqref{Exp:CT_constraint} indicates that the average received power of information-bearing signals at each Willie should be smaller than a certain threshold to ensure covertness.

	\section{Covert Communication Performance in MA-enhanced Mixed-field Systems}\label{Sec3}
	To draw useful insights into the system performance with and without subarray movement, we consider a typical mixed-field scenario where Alice communicates with a near-field Bob and a far-field Bob, while a near-field Willie detects the covert transmissions between Alice and Bobs.  Interestingly, we show that movable XL-array can effectively deal with the energy-spread effect in mixed-field scenarios and hence greatly improve the achievable sum-rate as well as covertness conditions compared to fixed XL-array.

To this end, we first make the following definitions.
	\begin{definition}[Correlation function]\label{Def:1}
		\rm  
		For movable XL-array, given two locations $\{\theta_i,r_i\}$ and $\{\theta_{j},r_{j}\}$,  the correlation function of two steering vectors given by~\eqref{Exp:steer_vec} is defined as 
		\begin{align}
			&\chi(\mathbf{q},\{\theta_{i},r_{i}\},\{\theta_{j},r_{j}\}) 
			\triangleq \big|\mathbf{a}^H(\mathbf{q},\theta_{i},r_{i})  \mathbf{a}(\mathbf{q},\theta_{j},r_{j}) \big| \nonumber \\ 
			& = \frac{1}{N}  \Big| \sum_{n=1}^{N} e^{\jmath\frac{2\pi}{\lambda}\hat{q}_{n} (\theta_{j} - \theta_{i} ) + \jmath\frac{2\pi}{\lambda}\hat{q}_n^2\l(\frac{1-\theta_{i}^2}{2 r_{i}} -\frac{1-\theta_{j}^2}{2 r_{j}} \r) }
			\Big| \label{Exp:Corfun}. 
		\end{align}
	\end{definition}
	
	\begin{definition}\label{Def:2}
		\rm 
		For fixed XL-array (i.e., $\mathbf{q} =  \mathbf{q}_{\rm U}$), given $ \theta_{i} = \theta_{j} =\theta $, the correlation function in~\eqref{Exp:Corfun} can be approximated as~\cite{Cui2022CE}
		\begin{align}
			\hat{\chi}(\mathbf{q}_{\rm U},\{\theta,r_{i}\},\{\theta,r_{j}\})
			\approx  |G(\beta)| \triangleq \left| \frac{C(\beta) + \jmath S(\beta)}{\beta} \right|, 
		\end{align}
		where $\beta = \sqrt{\frac{N^2d^2 (1-\theta^2) }{2\lambda}\left|\frac{1}{r_{i}}-\frac{1}{r_{j}} \right| }$, $C(\beta)=\int_{0}^{\beta} \cos(\frac{\pi}{2}t^2)dt$,  and $S(\beta)=\int_{0}^{\beta} \sin(\frac{\pi}{2}t^2)dt$.
	\end{definition}
	\begin{definition}[$\mu$-dB energy-spread angular support]\label{Def:3}
		\rm 
		For fixed XL-array, given near-field channel steering vector $\mathbf{a}(\mathbf{q}_{\rm U},\theta_{i},r_{i}) $ and far-field beamforming vector $\mathbf{w}(\theta) = \hat{\mathbf{a}}(\mathbf{q}_{\rm U},\theta)$,  the $\mu$-dB energy-spread angular support is defined as~\cite{Zhang2022Fast}
		\begin{align}
			\mathcal{A}_{\mu{\rm dB}} (\theta_{i},r_{i})=&\Big\{\theta~|~
			f\left( \mathbf{w}\left( \theta \right) ,\theta_{i},r_{i}\right)   \nonumber \\
			&~~~> \iota \max_{\theta\in[-1,1]}  f\left( \mathbf{w}\left( \theta\right) ,\theta_{i},r_{i}\right) \Big\}, 
		\end{align}
         where  $f\left( \mathbf{w}\left( \theta \right) ,\theta_{i},r_{i}\right)  \triangleq | \mathbf{a}^H(\mathbf{q}_{\rm U},\theta_{i},r_{i}) \mathbf{w}(\theta)  |$ denotes the normalized beam gain and $\iota \!=\! 10^{-\frac{\mu}{10}}$. In particular, $\mu \!= \!10$ corresponds to the 10-dB energy-spread region with ${f_{\rm peak}}\triangleq \max_{\theta\in[-1,1]}  f\left( \mathbf{w}\left( \theta\right) ,\theta_{i},r_{i}\right)$ denoting the peak normalized beam gain at range $r_{j}$ (equivalently ${{0.1}f_{\rm peak}}$).
		Specifically, $\mathcal{A}_{\mu{\rm dB}} (\theta_{i},r_{i})$ characterizes an angular support for which the corresponding normalized beam gain is larger than $\iota {f_{\rm peak}}$.
	\end{definition}

    Note that the defined $\mu$-dB angular support~\cite{Zhang2022Fast} has been recently employed to characterize the energy-spread effect for facilitating mixed-field system performance analysis and DFT-based beam training design. For illustration, we plot in Fig.~\ref{Fig:Sec3-EnergySpread} the normalized beam gain with different $\{\theta_{i},r_{i}\}$ under far-field beamformers $\mathbf{w}(\theta)$ versus different spatial angles $\theta$. It is observed that the normalized beam gain has a significantly large value with slight fluctuations within a specific interval, which is generally enlarged when the user range is reduced.
	
	\begin{figure}[t]
		\centering
		\includegraphics[width=0.35\textwidth]{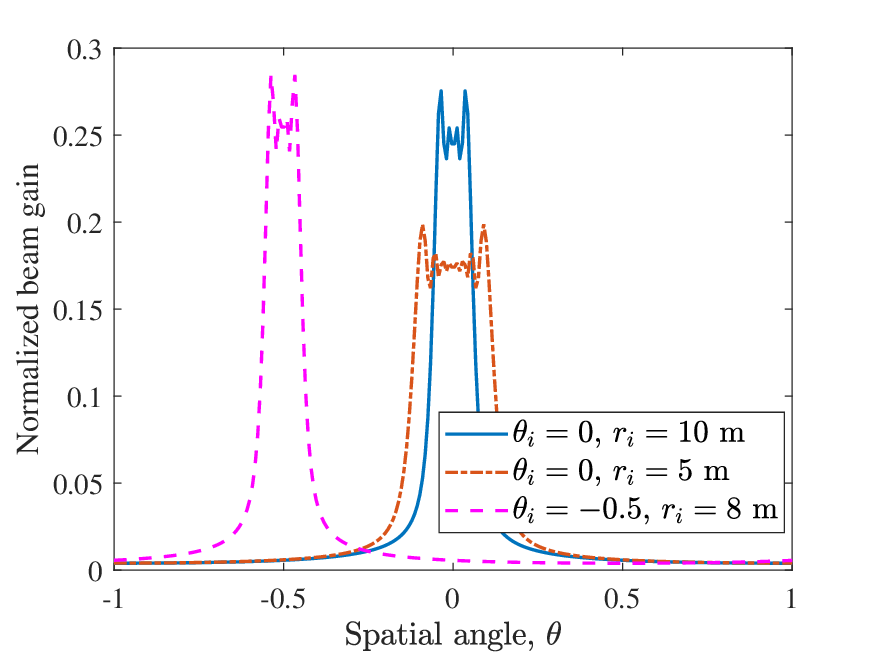}
		\caption{Normalized beam gain with far-field beamformers under different \{$\theta_{i},r_{i}$\} when $N=256$ and $f=30$ GHz.} \label{Fig:Sec3-EnergySpread}
		\vspace{-14pt}
	\end{figure}
	
	\subsection{Achievable Sum-rate Analysis}\label{Sec:III-A}
	\subsubsection{\textbf{\underline{Fixed XL-array}}}
	First, we study the achievable sum-rate performance for fixed XL-arrays when covertness constraints are not considered.
	To facilitate rate performance analysis under mixed-field channels, we consider the LoS-only scenario for high-frequency bands and a low-complexity hybrid beamforming design, while the more general multi-path scenario is considered in Section~\ref{Sec:IV}.
	Specifically, the analog beamforming is designed based on maximal ratio transmission (MRT) (i.e., steering beams towards  Bobs), while power allocation is optimized in digital beamforming~\cite{LiuTSBeam,liuty2025PLS,Wangzl}.\footnote{Note that more complicated digital beamforming designs (e.g., zero-forcing (ZF) or minimum mean-squared error (MMSE)) generally make the rate performance analysis more complicated and even intractable. Thus, we consider power allocation design in the performance analysis while digital beamforming will be optimized in Section~\ref{Sec:IV}.} 
As such,  the beamforming vectors are given by
	$\mathbf{F}_{\rm A} \mathbf{f}_{{\rm D},1} =\sqrt{P_{{\rm B},1}}\mathbf{a}({\mathbf{q}_{\rm U},\theta_{{\rm B},1},r_{{\rm B},1}})$ and $ \mathbf{F}_{\rm A} \mathbf{f}_{{\rm D},2} =\sqrt{P_{{\rm B},2}}\mathbf{a}({\mathbf{q}_{\rm U},\theta_{{\rm B},2},r_{{\rm B},2}}) $, where $P_{{\rm B},1}$ and $P_{{\rm B},2}$ denote the allocated transmit power of Alice to the two Bobs, respectively, with $P_{{\rm B},1} + P_{{\rm B},2} =P$. Based on the above, the achievable rates at the near-field Bob~$1$ and far-field Bob $2$ can be respectively rewritten as
	\begin{subequations}\label{Exp:Rate}
		\begin{align}
			R_{{\rm B},1} &= \log_2\l(1+\frac{P_{{\rm B},1} N g_{{\rm B},1}^2}{P_{{\rm B},2} N g_{{\rm B},1}^2 (\chi_{{{\rm B}_1},{{\rm B}_2}  } )^2  +   \sigma_{{\rm B}}^2}\r),\label{Exp:R1} \\
			R_{{\rm B},2} &= \log_2\l(1+\frac{P_{{\rm B},2} N g_{{\rm B},2}^2}{P_{{\rm B},1} N g_{{\rm B},2}^2 (\chi_{{{\rm B}_1},{{\rm B}_2}  } )^2  +   \sigma_{{\rm B}}^2}\r),\label{Exp:R2}
		\end{align} 
	\end{subequations}
	where $\chi_{{{\rm B}_1},{{\rm B}_2}  } $ represents the correlation between the channel steering vectors of Bob $1$ and Bob $2$. In addition, based on~\textbf{Definition~\ref{Def:3}},
	$\chi_{{{\rm B}_1},{{\rm B}_2}  } $ can be approximated as
	\begin{align}
			\chi_{{{\rm B}_1},{{\rm B}_2}  }  & =|\mathbf{a}^H({\mathbf{q}_{\rm U},\theta_{{\rm B},1},r_{{\rm B},1}}) \mathbf{a}({\mathbf{q}_{\rm U},\theta_{{\rm B},2},r_{{\rm B},2}})|   \\  
			&\approx \left\{ 
			\begin{aligned}
				&\chi(\mathbf{q}_{\rm U},\{\theta_{{\rm B},1},r_{{\rm B},1}\},\{\theta_{{\rm B},2},r_{{\rm B},2}\}), \\
				&~~~~~~~~~~~~~~~~~~~~~~~~\textrm{if}~\theta_{{\rm B},2} \in \mathcal{A}_{10{\rm dB}} (\theta_{{\rm B},1},r_{{\rm B},1}),\\
				&0,~~~~~~~~~~~~~~~~~~~~~\textrm{otherwise},
			\end{aligned} 
			\right. \nonumber
	\end{align}
	where $\mathcal{A}_{10{\rm dB}} (\theta_{{\rm B},1},r_{{\rm B},1})$ is the 10-dB energy-spread angular support with respect to (w.r.t.)  near-field Bob.\footnote{The 10-dB energy-spread region is employed here to guarantee that the correlation outside the angular support $\mathcal{A}_{10{\rm dB}} (\theta_{{\rm B},1},r_{{\rm B},1})$ approximates 0.}

	The optimal transmit power allocation $P_{{\rm B},1}^{*} $ and $P_{{\rm B},2}^{*}$ for maximizing the achievable sum-rate can be obtained via the exhaustive search method. Moreover, the achievable sum-rate of the two Bobs can be upper-bounded by $$R_{{\rm B},1}+R_{{\rm B},2}\!\le\!\log_2\!\left(\! 1\!+\!\frac{P_{{\rm B},1} N g_{{\rm B},1}^2 }{  \sigma_{{\rm B}}^2}\! \right) +
			\log_2\!\left(\!1\!+\!\frac{P_{{\rm B},2} N g_{{\rm B},2}^2}{  \sigma_{{\rm B}}^2}\!\right), 
    $$ where the equality holds when $\chi_{{{\rm B}_1},{{\rm B}_2} }=0$. 

	\begin{figure}[t]
		\centering
		\includegraphics[width=0.35\textwidth]{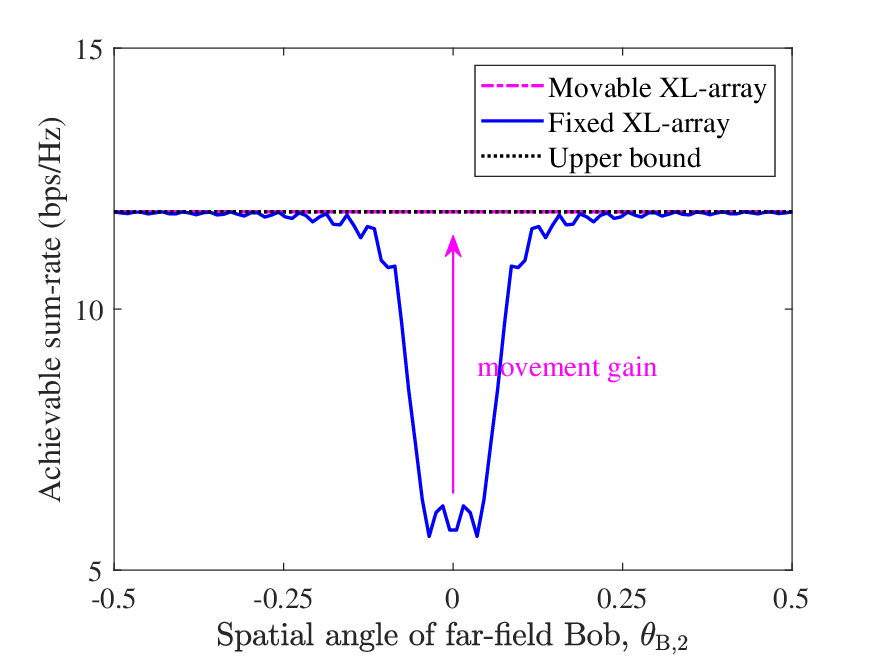}
		\caption{Achievable sum-rate versus spatial angle $\theta_{{\rm B},2}$ when $\theta_{{\rm B},1} = 0,~r_{{\rm B},1} = 10 $ m, and $ r_{{\rm B},2}=150 $ m.} \label{Fig:Sec3-MArate1}
		\vspace{-14pt}
	\end{figure}
	\begin{example}[Energy-spread effect]
		\rm In Fig.~\ref{Fig:Sec3-MArate1}, we plot the achievable sum-rate and its upper bound versus the spatial angle $\theta_{{\rm B},2}$, respectively. It is observed that the achievable sum-rate rapidly reduces in a specific angular region ($\theta_{{\rm B},2} \in \mathcal{A}_{10{\rm dB}} (\theta_{{\rm B},1},r_{{\rm B},1})$) because the orthogonality of channel steering vectors does not hold due to the energy-spread effect. This thus introduces significant inter-user interference between near-field Bob and far-field Bob, resulting in reduced achievable sum-rate. 
	\end{example}
	\subsubsection{\textbf{\underline{Movable XL-array}}}
	Next, we investigate the rate performance gain provided by movable XL-array. Specifically, the correlation function defined in~\eqref{Exp:Corfun} w.r.t. the positions of subarrays $\{{q}_m\}$ (i.e., $\chi(\mathbf{q},\{\theta_i,r_i\},\{\theta_{j},r_{j}\})$) can be rewritten as follows.
	\begin{lemma}\label{Lemma:Modular_approx}
		\rm 
		For  movable XL-array systems, the correlation function between channel steering vectors $\mathbf{a}(\mathbf{q},\theta_{i},r_{i}) $ and $\mathbf{a}(\mathbf{q},\theta_{j},r_{j})$ in~\eqref{Exp:Corfun} can be rewritten as
		\begin{align}
			&\chi(\mathbf{q},\{\theta_i,r_i\},\{\theta_{j},r_{j}\})  \nonumber \\
			& =  \frac{1}{N}  \Big| \sum_{m=1}^{M} \sum_{\tilde{n}=1}^{\tilde{N}}   e^{\jmath\frac{2\pi}{\lambda}({q}_{m} + \frac{2\tilde{n}-\tilde{N}-1}{2} d) \Omega_{\theta} + \jmath\frac{2\pi}{\lambda}({q}_{m} + \frac{2\tilde{n}-\tilde{N}-1}{2} d)^2 \Omega_{r} } \Big| \nonumber \\
			& \overset{(a)}{\approx} \frac{1}{N}\Bigg| \sum_{m=1}^{M} {\underset{\phi_{m}}{{\underbrace{e^{\jmath\frac{2\pi}{\lambda} ({q}_{m} \Omega_{\theta} + {q}_{m}^2 \Omega_{r} )  } }}}}  
			%\cdot
			\underset{A_{m}}{\underbrace{\frac{\sin(\frac{\tilde{N}\pi \Omega_{m}}{2})}{\sin(\frac{\pi \Omega_{m}}{2})}}}
			\Bigg|, \label{Exp:Modular_cor}
		\end{align}
		where $ \Omega_{\theta} \triangleq \theta_{j} - \theta_{i} $, $ \Omega_{r} \triangleq \frac{1-\theta_{i}^2}{2 r_{i}} -\frac{1-\theta_{j}^2}{2 r_{j}} $, $\Omega_{m} \triangleq 2{q}_{m} \Omega_{r} + \Omega_{\theta} $, and $(a)$ holds when $r_{i}$ and $r_{j}$ are larger than $\frac{2(\tilde{N}d)^2}{\lambda}$.
	\end{lemma}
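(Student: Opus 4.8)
The plan is to begin from the exact single-index expression in Definition~\ref{Def:1} and reorganize it into the double sum over subarrays $m$ and intra-subarray antennas $\tilde{n}$ appearing in the first line of~\eqref{Exp:Modular_cor}. This is achieved simply by substituting the position formula $\hat{q}_{n} = q_{m} + \frac{2\tilde{n}-\tilde{N}-1}{2}d$ from~\eqref{Exp:antenna_position}, together with the reindexing $n = \tilde{n} + (m-1)\tilde{N}$. Since this step is a pure relabeling of the summation, the first equality in~\eqref{Exp:Modular_cor} is exact and carries no approximation. All the real content then lies in evaluating the inner sum over $\tilde{n}$, which is where approximation $(a)$ is invoked.

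Next I would fix a subarray $m$ and introduce the intra-subarray offset $\delta_{\tilde{n}} \triangleq \frac{2\tilde{n}-\tilde{N}-1}{2}d$, so that $\hat{q}_{n} = q_{m} + \delta_{\tilde{n}}$ and $\hat{q}_{n}^2 = q_{m}^2 + 2q_{m}\delta_{\tilde{n}} + \delta_{\tilde{n}}^2$. Expanding the exponent and separating the terms depending only on the subarray center $q_{m}$ from those involving $\delta_{\tilde{n}}$, the factor $\phi_{m} = e^{\jmath\frac{2\pi}{\lambda}(q_{m}\Omega_{\theta} + q_{m}^2\Omega_{r})}$ is pulled outside the inner sum, while the remaining $\tilde{n}$-dependent exponent is $\frac{2\pi}{\lambda}\big[\delta_{\tilde{n}}\Omega_{m} + \delta_{\tilde{n}}^2\Omega_{r}\big]$, where the coefficient of the linear term is recognized as $\Omega_{m} = 2q_{m}\Omega_{r} + \Omega_{\theta}$.

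The key step, and the main obstacle, is justifying that the residual quadratic intra-subarray phase $\frac{2\pi}{\lambda}\delta_{\tilde{n}}^2\Omega_{r}$ is negligible, which is exactly the content of approximation $(a)$. I would bound it using $|\delta_{\tilde{n}}| \le \frac{(\tilde{N}-1)d}{2} \approx \frac{\tilde{N}d}{2}$ and $|\Omega_{r}| \le \frac{1}{2\min(r_{i},r_{j})}$, the latter following from $0 \le 1-\theta^2 \le 1$ and the nonnegativity of each term in $\Omega_{r}$. This yields a worst-case phase of at most $\frac{\pi(\tilde{N}d)^2}{4\lambda\min(r_{i},r_{j})}$. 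Under the stated condition $r_{i}, r_{j} > \frac{2(\tilde{N}d)^2}{\lambda}$, which is precisely the Rayleigh (far-field) distance of a single subarray of aperture $\tilde{N}d$, this phase falls below $\frac{\pi}{8}$, the classical threshold at which the spherical-wavefront curvature over one aperture becomes negligible. In words, even when a user lies in the near-field of the entire array, it remains in the far-field of each individual module, so the wavefront is effectively planar across each subarray and the quadratic offset term may be dropped.

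Finally, with the quadratic term removed, the inner sum reduces to $\sum_{\tilde{n}=1}^{\tilde{N}} e^{\jmath\frac{2\pi}{\lambda}\delta_{\tilde{n}}\Omega_{m}}$. Substituting the half-wavelength spacing $d = \lambda/2$, the exponent becomes $\frac{\pi}{2}(2\tilde{n}-\tilde{N}-1)\Omega_{m}$, and the symmetric index $2\tilde{n}-\tilde{N}-1$ centers the sum so that the standard Dirichlet-kernel identity applies, giving $A_{m} = \frac{\sin(\tilde{N}\pi\Omega_{m}/2)}{\sin(\pi\Omega_{m}/2)}$. Reassembling the factored phase $\phi_{m}$ and amplitude $A_{m}$ over all $m$, and restoring the $\frac{1}{N}$ normalization, produces the claimed right-hand side of~\eqref{Exp:Modular_cor}, completing the argument.
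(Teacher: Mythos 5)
Your proposal is correct and follows essentially the same route as the paper's Appendix~A: substitute the antenna-position formula, split the exponent into subarray-center and intra-subarray offset terms, drop the quadratic offset phase under the stated range condition, and sum the remaining centered geometric series into the Dirichlet-kernel form $\frac{\sin(\tilde{N}\pi\Omega_m/2)}{\sin(\pi\Omega_m/2)}$. The only difference is that you make explicit the bound $\frac{2\pi}{\lambda}\delta_{\tilde{n}}^2|\Omega_r|\le \frac{\pi}{8}$ justifying approximation $(a)$ (a correct per-subarray Fraunhofer argument), whereas the paper simply asserts that the second-order term can be ignored.
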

	\begin{proof}
		Please refer to Appendix~\ref{App:Modular_approx}.
	\end{proof}
	
	For movable XL-array, as shown in~\eqref{Exp:Modular_cor}, the correlation function can be expressed as a function of the positions of subarrays. By adjusting the position of each subarray, we can adjust the phase term $\{\phi_{m}\}$  and amplitude term $\{A_{m}\}$ in~\eqref{Exp:Modular_cor} to control the value of the correlation function.
	This demonstrates the potential of movable XL-array in addressing the mix-field interference arising from the energy-spread effect, since the achievable rates $R_{{\rm B},1}$ and $R_{{\rm B},2}$ decrease with the increase of $\chi_{{{\rm B}_1},{{\rm B}_2}  }(\mathbf{q}) = \chi(\mathbf{q},\{\theta_{{\rm B},1},r_{{\rm B},1}\},\{\theta_{{\rm B},2},r_{{\rm B},2}\})$. 
    
	According to~\eqref{Exp:Rate},  maximizing the achievable sum-rate is equivalent to minimizing the correlation function $\chi_{{{\rm B}_1},{{\rm B}_2}  }(\mathbf{q})$ in~\eqref{Exp:Rate} by optimizing the positions of subarrays. This problem can be mathematically formulated as follows
	\begin{subequations}
		\begin{align}
			(\textbf{P1}):\;\min_{\mathbf{q}}&\quad\chi_{{{\rm B}_1},{{\rm B}_2}  }(\mathbf{q}) \nonumber \\ 
			{\rm {s.t.}}&\quad|{q}_{i}-{q}_{j}|\ge \tilde{N} d,  ~~\forall i \neq j\in \mathcal{M},\label{C:XL-mMA1}\\
			&\quad{q}_{m} \in \mathcal{C}_{\mathbf{q}},~\forall m\in \mathcal{M},\label{C:XL-mMA2}
		\end{align} 
	\end{subequations}
	where constraints~\eqref{C:XL-mMA1} and~\eqref{C:XL-mMA2} specify the allowable movement region and minimum spacing between adjacent subarrays with $\mathcal{C}_{\mathbf{q}} = [{q}_{\rm min}, {q}_{\rm max} ]$.
    
	Note that Problem (\textbf{P1}) is a non-convex optimization problem due to the complicated near-field steering vectors, and thus is challenging to solve optimally. To address this issue and obtain useful insights, we first solve a relaxed problem of (\textbf{P2}) as follows, where the movement region constraint in~\eqref{C:XL-mMA2} is not considered.
	\begin{align}
		(\textbf{P2}):\;\min_{\mathbf{q}}~\chi_{{{\rm B}_1},{{\rm B}_2}  }(\mathbf{q}) 
		\quad\quad {\rm {s.t.}}~\eqref{C:XL-mMA1}. \nonumber
	\end{align}  
	\begin{proposition}\label{Theorem:opt_pos}
		\rm Given $\Omega_{\theta} $ and $\Omega_{r}$, an  optimal solution to Problem (\textbf{P2}) is given by
		\begin{align}
			{q}_{m} \in \left\{ \frac{2k-\tilde{N}\Omega_{\theta}}{2 \tilde{N}\Omega_{r}} \,\middle|\, k \in \mathbb{Z},\forall i\in\mathbb{Z},k\neq i\tilde{N}
			\right\}.
		\end{align} 
	\end{proposition}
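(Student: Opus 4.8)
The plan is to exploit the amplitude--phase decomposition established in Lemma~\ref{Lemma:Modular_approx} together with the fact that the objective is nonnegative, so that the global minimum is $0$ whenever it can be attained. Writing $\chi_{{\rm B}_1,{\rm B}_2}(\mathbf{q}) = \frac{1}{N}\big|\sum_{m=1}^{M}\phi_m A_m\big|$, with unit-modulus phase $\phi_m = e^{\jmath\frac{2\pi}{\lambda}(q_m\Omega_\theta + q_m^2\Omega_r)}$ and amplitude $A_m = \frac{\sin(\tilde{N}\pi\Omega_m/2)}{\sin(\pi\Omega_m/2)}$ where $\Omega_m = 2q_m\Omega_r + \Omega_\theta$, I would first observe that a sufficient condition for $\chi_{{\rm B}_1,{\rm B}_2}=0$ is that \emph{every} amplitude term vanishes, i.e. $A_m = 0$ for all $m$, since this annihilates the entire sum irrespective of the phases $\phi_m$. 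Because $\chi_{{\rm B}_1,{\rm B}_2}\ge 0$ always holds, exhibiting a feasible $\mathbf{q}$ that attains $\chi_{{\rm B}_1,{\rm B}_2}=0$ certifies global optimality for (\textbf{P2}).

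Next I would characterise the zero set of the Dirichlet-type kernel $A_m$. Its numerator $\sin(\tilde{N}\pi\Omega_m/2)$ vanishes exactly when $\Omega_m = 2k/\tilde{N}$ for some $k\in\mathbb{Z}$, whereas its denominator $\sin(\pi\Omega_m/2)$ vanishes exactly when $\Omega_m = 2i$ for some $i\in\mathbb{Z}$, i.e. when $k = i\tilde{N}$; at those coincident points $A_m = \pm\tilde{N}$ by L'Hôpital's rule rather than $0$. Hence $A_m = 0$ if and only if $\Omega_m = 2k/\tilde{N}$ with $k \neq i\tilde{N}$. Substituting $\Omega_m = 2q_m\Omega_r + \Omega_\theta$ and solving for $q_m$ then yields precisely $q_m = \frac{2k - \tilde{N}\Omega_\theta}{2\tilde{N}\Omega_r}$ with $k\in\mathbb{Z}$, $k\neq i\tilde{N}$, which is the claimed set.

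Finally I would verify feasibility against the minimum-spacing constraint~\eqref{C:XL-mMA1}. The admissible positions form an arithmetic lattice of spacing $1/(\tilde{N}|\Omega_r|)$ on the $y$-axis that extends to $\pm\infty$; since (\textbf{P2}) drops the bounded-region constraint~\eqref{C:XL-mMA2}, I can pick $M$ distinct indices $k_1 < \cdots < k_M$ (each avoiding multiples of $\tilde{N}$) spaced far enough apart that the corresponding $q_1,\ldots,q_M$ are pairwise separated by at least $\tilde{N}d$, so \eqref{C:XL-mMA1} is satisfied. Each such choice drives every $A_m$ to $0$ simultaneously, so $\chi_{{\rm B}_1,{\rm B}_2}=0$ is achieved, and by the nonnegativity argument this point is optimal.

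The main obstacle is the kernel zero characterisation, in particular the exclusion $k\neq i\tilde{N}$: the same value of $\Omega_m$ that kills the numerator can also kill the denominator, so a careless argument would mistake the peak $\pm\tilde{N}$ for a null. Isolating this degenerate case via the limiting value is the crux; the remaining ingredients (nonnegativity of $\chi_{{\rm B}_1,{\rm B}_2}$ and lattice feasibility of the spacing constraint) are routine.
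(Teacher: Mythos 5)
Your proof is correct and follows essentially the same route as the paper: force the Dirichlet-kernel amplitude $A_m$ to vanish by setting $\sin(\tilde{N}\pi\Omega_m/2)=0$ while $\sin(\pi\Omega_m/2)\neq 0$, which yields $\Omega_m=2k/\tilde{N}$ with $k\neq i\tilde{N}$, and then solve for $q_m$. Your additional steps—certifying global optimality via nonnegativity of $\chi$ and checking that the unbounded lattice admits $M$ points satisfying the spacing constraint \eqref{C:XL-mMA1}—are details the paper leaves implicit, but the core argument is identical.
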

	\begin{proof}
		\rm Let $ \sin(\frac{\tilde{N}\pi \Omega_{m}}{2}) = 0 $ and $ \sin(\frac{\pi \Omega_{m}}{2}) \neq 0 $. The value of $\Omega_{m}$ should satisfy the following condition
		\begin{align}\label{Exp:optOmega_m}
			\Omega_{m} \in \left\{ \frac{2k}{N} \,\middle|\, k \in \mathbb{Z},\forall i\in\mathbb{Z},k\neq i\tilde{N}  \right\}.
		\end{align}
		By substituting $\Omega_{m} = 2{q}_{m} \Omega_{r} + \Omega_{\theta}$ into~\eqref{Exp:optOmega_m}, the proof is thereby completed.
	\end{proof}
	
	On the other hand, for the original Problem (\textbf{P1}) with movement region constraint, the optimal solution generally may not be easily obtained in closed form, while its high-quality suboptimal solution can be obtained by using efficient algorithms, such as DE~\cite{das2010differential}; the details will be provided in Section~\ref{Sec:IV}. 
	\begin{remark}[Mixed-field interference nulling]
		\rm 
		To demonstrate the superiority of proposed movable XL-array scheme, we present in Fig.~\ref{Fig:Sec3-MArate1} the achievable sum-rate of the movable XL-array as well.
        It is observed that by adjusting the positions of subarrays, the proposed movable XL-array achieves close rate performance to its upper bound, indicating that interference nulling (or channel decorrelation) is almost achieved between mixed-field users, even under the array aperture constraint.
	\end{remark}

    \vspace{-18pt}
	\subsection{Covertness Analysis}\label{Sec:III-B}
	In this subsection, we analyze the covert transmission performance under mixed-field channel setup. Since the minimum DEP for each Willie, as indicated in~\eqref{Exp:MDEP}, is independent of the others, we thus consider one Willie (e.g., Willie 1) in the following performance analysis, while the results can be easily extended to the multi-Willie case.
	
	\subsubsection{\textbf{\underline{Fixed XL-array}}}
	For fixed XL-array systems, with the hybrid beamforming matrices provided in Section~\ref{Sec:III-A}, the average received power of information-bearing signals $f_{{\rm Co},1}$ at Willie is given by
	\begin{align}
		f_{{\rm Co},1}  = \sum_{b=1}^{2} P_{{\rm B},b} (\chi_{{\rm W}_{1},{\rm B}_{b} })^2 N  g_{{\rm W},1}^2,
	\end{align} 
	where $\chi_{{\rm W}_{1},{\rm B}_{b} }$ is the correlation between the channel steering vectors $ \mathbf{a}({\mathbf{q}_{\rm U},\theta_{{\rm W},1},r_{{\rm W},1}}) $ and $ \mathbf{a}({\mathbf{q}_{\rm U},\theta_{{\rm B},b},r_{{\rm B},b}}) $  with $ b\in\{1,2\}$. Given a positive constant $\varsigma$, the following lemma provides a condition for ensuring covert transmission. 
	\begin{lemma}[Covert transmission condition]\label{Lemma:CT_condition}
		\rm 
		For fixed XL-array systems, given the location of Willie 1 and $\epsilon = 1- \varsigma$, the covert transmission is guaranteed when
		\begin{align}
			\sum_{{b}=1}^{2} P_{{\rm B},b} (\chi_{{\rm W}_{1},{\rm B}_{b} })^2 \le  \frac{\tilde{\sigma}_{\rm w}^2}{N g_{{\rm W},1}^2} \! \left( \frac{\rho^{2(1-\epsilon)} - 1}{\rho}\right)\!  \triangleq \Gamma_{{\rm W}_{1}}(\epsilon). \nonumber
		\end{align}
	\end{lemma}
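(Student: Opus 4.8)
The plan is to recognize this lemma as a direct specialization of the covertness constraint already derived in \eqref{Exp:CT_constraint} to the single-Willie, fixed XL-array setting. Recall that \eqref{Exp:CT_constraint} establishes that covert communication is guaranteed exactly when the average received information-bearing power at each Willie satisfies $f_{{\rm Co},w} \le \tilde{\sigma}_{\rm w}^2(\rho^{2\varsigma}-1)/\rho$. Accordingly, the first step is to set $w = 1$ and adopt the reparametrization stated in the lemma, $\epsilon = 1 - \varsigma$, so that $2\varsigma = 2(1-\epsilon)$ and the guaranteeing condition reads $f_{{\rm Co},1} \le \tilde{\sigma}_{\rm w}^2(\rho^{2(1-\epsilon)}-1)/\rho$.

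The second step is a straightforward substitution and rearrangement. For the fixed XL-array with the MRT-based hybrid beamforming of Section~\ref{Sec:III-A}, the information-bearing power takes the closed form $f_{{\rm Co},1} = \sum_{b=1}^{2} P_{{\rm B},b}(\chi_{{\rm W}_1,{\rm B}_b})^2 N g_{{\rm W},1}^2$. Inserting this into the inequality above and dividing both sides by the strictly positive factor $N g_{{\rm W},1}^2$ isolates the channel-correlation terms on the left-hand side and yields precisely the claimed bound, with $\Gamma_{{\rm W}_1}(\epsilon) \triangleq \frac{\tilde{\sigma}_{\rm w}^2}{N g_{{\rm W},1}^2}(\rho^{2(1-\epsilon)}-1)/\rho$. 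This completes the derivation of the stated covert transmission condition.

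The only point that genuinely requires care, which I would verify to make the argument self-contained, is the branch selection underlying \eqref{Exp:CT_constraint}. The minimum DEP in \eqref{Exp:MDEP} is piecewise: it equals $0$ once $f_{{\rm Co},1} > \tilde{\sigma}_{\rm w}^2(\rho - 1/\rho)$, a regime in which the covertness requirement $\epsilon_{1}^{*} \ge 1 - \varsigma$ can never hold for $\varsigma < 1$. Hence one must work in the nontrivial branch, where enforcing $\epsilon_{1}^{*} \ge 1 - \varsigma$ is equivalent, after exponentiating $\ln(1 + \rho f_{{\rm Co},1}/\tilde{\sigma}_{\rm w}^2) \le 2\varsigma\ln\rho$, to $f_{{\rm Co},1} \le \tilde{\sigma}_{\rm w}^2(\rho^{2\varsigma}-1)/\rho$; since $\varsigma \in [0,1]$ gives $\rho^{2\varsigma}-1 \le \rho^2 - 1$ and thus $(\rho^{2\varsigma}-1)/\rho \le \rho - 1/\rho$, the second candidate in the minimum is always the binding one, justifying the collapse in \eqref{Exp:CT_constraint}. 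Beyond this branch bookkeeping there is no real obstacle: the result follows from a chain of equivalent algebraic rearrangements applied to previously established expressions.
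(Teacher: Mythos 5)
Your proposal is correct and matches the paper's approach: the paper likewise obtains this lemma by substituting the closed-form $f_{{\rm Co},1} = \sum_{b=1}^{2} P_{{\rm B},b}(\chi_{{\rm W}_1,{\rm B}_b})^2 N g_{{\rm W},1}^2$ into the covertness constraint \eqref{Exp:CT_constraint} with $\varsigma = 1-\epsilon$ and dividing by $N g_{{\rm W},1}^2$. Your additional verification of the branch selection in \eqref{Exp:CT_constraint} is sound and only makes explicit what the paper already established there.
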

	\begin{proof}
		The proof of Lemma~\ref{Lemma:CT_condition} can be easily obtained from~\eqref{Exp:CT_constraint} with details omitted for brevity. 
	\end{proof}
	
	\begin{remark}[What affects covert transmission condition?]
		\rm 
		{\bf Lemma~\ref{Lemma:CT_condition}}  reveals that several factors influence the covert transmission condition. 
		1) Complex-valued path gain $g_{{\rm W},1}$: a weaker path-gain results in a higher $\Gamma_{{\rm W}_{1}}(\epsilon)$, leading to a more relaxed condition for covert transmission;
		2) DEP requirement $\epsilon$: a higher $\epsilon$ results in a  stricter condition for covert transmission; 
		3) Transmit power $P_{{\rm B},b}, b\in\{1,2\}$: increasing transmit power results in a more stringent condition for covert transmission; 
		and 4) Channel correlations $\chi_{{\rm W}_{1},{\rm B}_{b} }, b\in\{1,2\}$: a higher correlation $\chi_{{\rm W}_{1},{\rm B}_{b} } $ may lead to the failure of covert transmission condition.
	\end{remark}

	Among these factors, the channel correlation $ \chi_{{\rm W}_{1},{\rm B}_{b} }, b\in\{1,2\} $ is particularly critical, which directly affects the covert transmission condition when the location of Willie, DEP requirement~$\epsilon$, and transmit power $P_{{\rm B},b}, b\in\{1,2\}$ are given. To facilitate analysis, we first divide the mixed-field covert communication scenario into the following four cases, based on the values of $ \chi_{{\rm W}_{1},{\rm B}_{b} }, b\in\{1,2\} $. 
    \begin{itemize}
    \item {\bf Case 1} (No power leakage): The channel steering vectors between both Bobs and Willie are nearly orthogonal (i.e., $\chi_{{\rm W}_{1},{\rm B}_{1} } \approx 0$ and $\chi_{{\rm W}_{1},{\rm B}_{2} } \approx 0$), which is mathematically represented as
		\begin{align}
			\mathbf{C}_{1} \triangleq 
			\big\{\theta_{{\rm B},1} \neq \theta_{{\rm W},1},
			\theta_{{\rm B},2} \notin {\mathcal{A}}_{\mu{\rm dB}} (\theta_{{\rm W},1},r_{{\rm W},1})
			\big\}. \nonumber
		\end{align}
      
	\item {\bf Case 2} (Power leakage from far-field Bob): Only the correlation between far-field Bob and Willie  is non-negligible (i.e., $\chi_{{\rm W}_{1},{\rm B}_{1} } \approx 0$ and $\chi_{{\rm W}_{1},{\rm B}_{2} } \neq 0$), which is  expressed~as
		\begin{align}
			\mathbf{C}_{2} \triangleq 
			\big\{\theta_{{\rm B},1} \neq \theta_{{\rm W},1},
			\theta_{{\rm B},2} \in {\mathcal{A}}_{\mu{\rm dB}} (\theta_{{\rm W},1},r_{{\rm W},1})
			\big\}. \nonumber
		\end{align}
        
	\item {\bf Case 3} (Power leakage from near-field Bob): Only the correlation between near-field Bob and Willie is non-negligible (i.e., $\chi_{{\rm W}_{1},{\rm B}_{1} } \neq  0$ and $\chi_{{\rm W}_{1},{\rm B}_{2} } \approx 0$), which is  expressed as
		\begin{align}
			\mathbf{C}_{3} \triangleq 
			\big\{\theta_{{\rm B},1} = \theta_{{\rm W},1},
			\theta_{{\rm B},2} \notin {\mathcal{A}}_{\mu{\rm dB}} (\theta_{{\rm W},1},r_{{\rm W},1})
			\big\}. \nonumber
		\end{align}
	\item {\bf Case 4} (Power leakage from both Bobs):
	For both Bobs, the correlations between their channel steering vectors and those of Willie are non-negligible
		(i.e., $\chi_{{\rm W}_{1},{\rm B}_{1} } \neq 0$ and $\chi_{{\rm W}_{1},{\rm B}_{2} } \neq 0$), which is  represented by
		\begin{align}
			\mathbf{C}_{4} \triangleq 
			\big\{\theta_{{\rm B},1} = \theta_{{\rm W},1},
			\theta_{{\rm B},2} \in {\mathcal{A}}_{\mu{\rm dB}} (\theta_{{\rm W},1},r_{{\rm W},1})
			\big\}. \nonumber
		\end{align}
    \end{itemize}

 For Case 1, the conditions for covert transmission can always be satisfied; thus we focus on the other three cases to draw useful insights. Specifically, for Case 2, the power leakage caused by energy-spread results in a significantly large value of $\chi_{{\rm W}_{1},{\rm B}_{2} }$. In the following, we first present the covert transmission condition for Case 2. 
	\begin{lemma}[Covert transmission condition of far-field Bob]\label{lem:CT_conB2}
		\rm 
		Given the location of near-field Willie,
		when $\theta_{{\rm B},1}\!\neq\!\theta_{{\rm W},1}$
		and $\theta_{{\rm B},2} \in {\mathcal{A}}_{\mu{\rm dB}}   (\theta_{{\rm W},1},r_{{\rm W},1})$, the covert transmission condition in \textbf{Lemma~\ref{Lemma:CT_condition}} reduces to 
		\begin{align}
			\chi_{{\rm W}_{1},{\rm B}_{2} } \le  \sqrt{  {\Gamma_{{\rm W}_{1}}(\epsilon)}/{P_{{\rm B},2}}}  \triangleq {\Delta}_{2}(\epsilon).
		\end{align}
	\end{lemma}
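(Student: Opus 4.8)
The plan is to obtain the stated condition as a direct specialization of the general covert-transmission condition in \textbf{Lemma~\ref{Lemma:CT_condition}}, exploiting the two structural hypotheses that define Case~2. First I would recall that \textbf{Lemma~\ref{Lemma:CT_condition}} guarantees covertness precisely when
\[
P_{{\rm B},1}\,(\chi_{{\rm W}_{1},{\rm B}_{1}})^2 + P_{{\rm B},2}\,(\chi_{{\rm W}_{1},{\rm B}_{2}})^2 \;\le\; \Gamma_{{\rm W}_{1}}(\epsilon),
\]
so the task reduces to simplifying the left-hand side under the assumptions $\theta_{{\rm B},1}\neq\theta_{{\rm W},1}$ and $\theta_{{\rm B},2}\in\mathcal{A}_{\mu{\rm dB}}(\theta_{{\rm W},1},r_{{\rm W},1})$. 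The second hypothesis is exactly the energy-spread regime in which $\chi_{{\rm W}_{1},{\rm B}_{2}}$ is non-negligible, so no simplification of that term is expected; the work is entirely in the first term.

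The key step is to argue that the first summand vanishes. Since both near-field Bob~1 and near-field Willie~1 are described by near-field spherical-wave steering vectors, I would evaluate $\chi_{{\rm W}_{1},{\rm B}_{1}}$ through the correlation function of \textbf{Definition~\ref{Def:1}}. With $\theta_{{\rm B},1}\neq\theta_{{\rm W},1}$ the linear phase mismatch $\Omega_{\theta}=\theta_{{\rm W},1}-\theta_{{\rm B},1}\neq 0$ is present across all $N$ elements, inducing destructive interference (a Dirichlet/Fresnel-kernel-type decay of order $1/N$) that drives the normalized sum to zero for the large $N$ typical of XL-arrays. This is precisely the near-field angular-domain orthogonality already encoded in the Case taxonomy (it is what distinguishes Case~1/Case~2, where $\chi_{{\rm W}_{1},{\rm B}_{1}}\approx 0$, from Case~3/Case~4). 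Hence $\chi_{{\rm W}_{1},{\rm B}_{1}}\approx 0$, and the $P_{{\rm B},1}$ term drops out of the inequality.

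With $\chi_{{\rm W}_{1},{\rm B}_{1}}\approx 0$ the condition collapses to $P_{{\rm B},2}\,(\chi_{{\rm W}_{1},{\rm B}_{2}})^2\le\Gamma_{{\rm W}_{1}}(\epsilon)$. Because $\chi_{{\rm W}_{1},{\rm B}_{2}}\ge 0$ and $P_{{\rm B},2}>0$, dividing by $P_{{\rm B},2}$ and taking the nonnegative square root yields $\chi_{{\rm W}_{1},{\rm B}_{2}}\le\sqrt{\Gamma_{{\rm W}_{1}}(\epsilon)/P_{{\rm B},2}}=\Delta_{2}(\epsilon)$, as claimed. The only genuinely nontrivial point is the justification that $\chi_{{\rm W}_{1},{\rm B}_{1}}$ is negligible; the remaining manipulations are elementary algebra. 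I expect the main obstacle to be making that approximation precise, since its validity hinges on the angle gap $|\theta_{{\rm W},1}-\theta_{{\rm B},1}|$ exceeding the array's angular resolution (of order $1/N$), a requirement that is implicit in the Case~2 description and ensured by the large aperture of the XL-array.
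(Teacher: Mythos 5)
Your proposal is correct and follows essentially the same route as the paper, which simply specializes the general condition of \textbf{Lemma~\ref{Lemma:CT_condition}} to Case~2 where $\chi_{{\rm W}_{1},{\rm B}_{1}}\approx 0$ (the paper omits the details for brevity). Your additional justification of why the angle mismatch $\theta_{{\rm B},1}\neq\theta_{{\rm W},1}$ forces $\chi_{{\rm W}_{1},{\rm B}_{1}}\approx 0$ is consistent with how the paper defines its Case taxonomy and is a reasonable elaboration of the step the paper leaves implicit.
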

	\begin{proof}
		The proof  can be easily obtained based on Lemma~\ref{Lemma:CT_condition}, and thus is omitted here for brevity. 
	\end{proof}
	
	\textbf{Lemma~\ref{lem:CT_conB2}} indicates that the channel correlation between near-field Willie and far-field Bob caused by the energy-spread effect should be sufficiently small to guarantee covertness.  In addition, an increasing $\epsilon$ yields a smaller ${\Delta}_{2}(\epsilon)$ (i.e., a more stringent covert transmission condition). 
    Moreover, as $ \mathbf{a}({\mathbf{q}_{\rm U},\theta_{{\rm B},2},r_{{\rm B},2}}) $ can be approximated by a far-field steering vector $\hat{\mathbf{a}}({\mathbf{q}_{\rm U},\theta_{{\rm B},2}})$,  the covert transmission condition for far-field Bob depends on its spatial angle.  Based on the above, we characterize the covert transmission angle for far-field Bob, for which its transmission covertness can be guaranteed. 
	\begin{proposition}[Covert transmission angle of far-field Bob]\label{Pro:CTA}
		\rm 
		Given the location of near-field Willie, when  $\theta_{{\rm B},1} \neq \theta_{{\rm W},1}$ and
		$\theta_{{\rm B},2} \in {\mathcal{A}}_{\mu{\rm dB}} (\theta_{{\rm W},1},r_{{\rm W},1})$, the covert transmission angle of far-field Bob is obtained as
		\begin{align}
        \label{eq:covert region}
			&\Big\{ \theta_{{\rm B},2} \le \Xi_{\rm left}\left(
            \theta_{{\rm W},1},r_{{\rm W},1},{\Delta}_{2}(\epsilon)
            \right)\Big\} \nonumber \\
			&\quad \quad  \cup
			\Big\{	\theta_{{\rm B},2} \ge \Xi_{\rm right}\left(
                \theta_{{\rm W},1},r_{{\rm W},1},{\Delta}_{2}(\epsilon)
                \right) \Big\}.
		\end{align}
	Herein, {\small$\Xi_{\rm left}(\theta_{{\rm W},1},r_{{\rm W},1},{\Delta}_{2}(\epsilon)) $} and  {\small$\Xi_{\rm right}(\theta_{{\rm W},1},r_{{\rm W},1},{\Delta}_{2}(\epsilon)) $} respectively denote the two solutions to the equation {$f\left( \mathbf{w}\left(\theta \right),\theta_{{\rm W},1},r_{{\rm W},1}  \right) = {\Delta}_{2}(\epsilon)$}.
	\end{proposition}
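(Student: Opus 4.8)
The plan is to reduce the correlation-based covertness condition of \textbf{Lemma~\ref{lem:CT_conB2}} to a condition on the single far-field angle $\theta_{{\rm B},2}$ via the normalized beam-gain function of \textbf{Definition~\ref{Def:3}}, and then to read off the admissible angles from the level sets of that function. The two ingredients I would rely on are (i) the far-field approximation of Bob~2's steering vector and (ii) the energy-spread shape of the beam gain illustrated in Fig.~\ref{Fig:Sec3-EnergySpread}.

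First, I would exploit the far-field nature of Bob~2. Since $r_{{\rm B},2} > Z(\theta_{{\rm B},2})$, the near-field steering vector $\mathbf{a}(\mathbf{q}_{\rm U},\theta_{{\rm B},2},r_{{\rm B},2})$ reduces to the far-field beamforming vector $\mathbf{w}(\theta_{{\rm B},2}) = \hat{\mathbf{a}}(\mathbf{q}_{\rm U},\theta_{{\rm B},2})$. Substituting this into the definition of the correlation and invoking the normalized beam gain of \textbf{Definition~\ref{Def:3}} yields
\begin{align}
\chi_{{\rm W}_{1},{\rm B}_{2} }
= \big| \mathbf{a}^H(\mathbf{q}_{\rm U},\theta_{{\rm W},1},r_{{\rm W},1}) \mathbf{w}(\theta_{{\rm B},2}) \big|
= f\!\left( \mathbf{w}(\theta_{{\rm B},2}),\theta_{{\rm W},1},r_{{\rm W},1}\right). \nonumber
\end{align}
Hence, by \textbf{Lemma~\ref{lem:CT_conB2}}, transmission covertness for the far-field Bob holds exactly when $f\!\left( \mathbf{w}(\theta_{{\rm B},2}),\theta_{{\rm W},1},r_{{\rm W},1}\right) \le {\Delta}_{2}(\epsilon)$.

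Second, I would analyze this inequality as a level-set problem for $g(\theta) \triangleq f\!\left( \mathbf{w}(\theta),\theta_{{\rm W},1},r_{{\rm W},1}\right)$ viewed as a function of the far-field angle $\theta$. The energy-spread characterization of \textbf{Definition~\ref{Def:3}}, together with Fig.~\ref{Fig:Sec3-EnergySpread}, establishes that $g$ attains its peak $f_{\rm peak}$ near $\theta = \theta_{{\rm W},1}$ and decays on both flanks of the main lobe. Provided ${\Delta}_{2}(\epsilon) < f_{\rm peak}$ (otherwise the condition is met for all $\theta$ and the covert region degenerates to $[-1,1]$), the equation $g(\theta) = {\Delta}_{2}(\epsilon)$ possesses two solutions bracketing the main lobe, which I would identify with $\Xi_{\rm left}(\theta_{{\rm W},1},r_{{\rm W},1},{\Delta}_{2}(\epsilon))$ and $\Xi_{\rm right}(\theta_{{\rm W},1},r_{{\rm W},1},{\Delta}_{2}(\epsilon))$, $\Xi_{\rm left} < \Xi_{\rm right}$. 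By the monotone decay on each flank, $g(\theta) \le {\Delta}_{2}(\epsilon)$ holds precisely for $\theta \le \Xi_{\rm left}$ or $\theta \ge \Xi_{\rm right}$, which is exactly the covert transmission region stated in~\eqref{eq:covert region}.

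The main obstacle is justifying that $g(\theta) = {\Delta}_{2}(\epsilon)$ has only the two relevant crossings and that the covert inequality holds on the whole complement of $[\Xi_{\rm left},\Xi_{\rm right}]$. Strictly, $g$ is a sum of exponentials with quadratic phases and is \emph{not} globally unimodal: it carries sidelobes, so it is not monotone everywhere. The argument therefore rests on the same approximation that underpins the $\mu$-dB energy-spread support, namely that outside the main lobe the correlation is negligibly small, so the sidelobe ripples lie below ${\Delta}_{2}(\epsilon)$ and create no additional boundary crossings of practical relevance. I would thus restrict the monotonicity argument to the two main-lobe flanks, consistent with \textbf{Definition~\ref{Def:3}} and the numerical behavior in Fig.~\ref{Fig:Sec3-EnergySpread}; a fully rigorous treatment would additionally demand explicit upper bounds on the sidelobe levels relative to ${\Delta}_{2}(\epsilon)$, which the energy-spread approximation circumvents.
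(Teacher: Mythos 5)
Your proposal is correct and follows essentially the same route as the paper: reduce the covertness condition of Lemma~3 to a level-set condition on the normalized beam gain $f\left( \mathbf{w}(\theta),\theta_{{\rm W},1},r_{{\rm W},1}\right)$ via the far-field approximation of Bob~2's steering vector, then identify the two main-lobe crossings as $\Xi_{\rm left}$ and $\Xi_{\rm right}$. The paper handles the non-unimodality issue you flag with a footnote (multiple crossings due to power fluctuation are close together, so only the smallest and largest solutions are retained), which matches your own caveat about sidelobe ripples.
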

	\begin{proof}
        Given a threshold ${\Delta}_{2}(\epsilon)$, if $\theta_{{\rm B},2} \in {\mathcal{A}}_{\mu{\rm dB}} (\theta_{{\rm W},1},r_{{\rm W},1})$, the equation $f\left( \mathbf{w}\left(\theta \right),\theta_{{\rm W},1},r_{{\rm W},1}  \right) = {\Delta}_{2}(\epsilon)$ mainly has two solutions, denoted by {$\Xi_{\rm left}(\theta_{{\rm W},1},r_{{\rm W},1},{\Delta}_{2}(\epsilon)) $} and  {$\Xi_{\rm right}(\theta_{{\rm W},1},r_{{\rm W},1},{\Delta}_{2}(\epsilon)) $}  (as illustrated in Fig. \ref{Fig:CTangle}).\footnote{Although the equation $f\left( \mathbf{w}\left(\theta\right),\theta_{{\rm W},1},r_{{\rm W},1}  \right) = {\Delta}_{2}(\epsilon)$ may have multiple solutions due to power fluctuation, these solutions are very close to each other. For convenience, we focus only on the smallest and largest solutions~\cite{liuty2025PLS}.}
        As such, the covert transmission angle of far-field Bob can be directly obtained as that in \eqref{eq:covert region}.
	\end{proof}
	
	\textbf{Proposition~\ref{Pro:CTA}} indicates that to ensure the covert transmission condition, the spatial angle difference between Willie and far-field Bob must exceed a certain threshold, which is determined by  ${\Delta}_{2}(\epsilon)$. 
    \begin{figure}[t]
		\centering
		\includegraphics[width=0.3\textwidth]{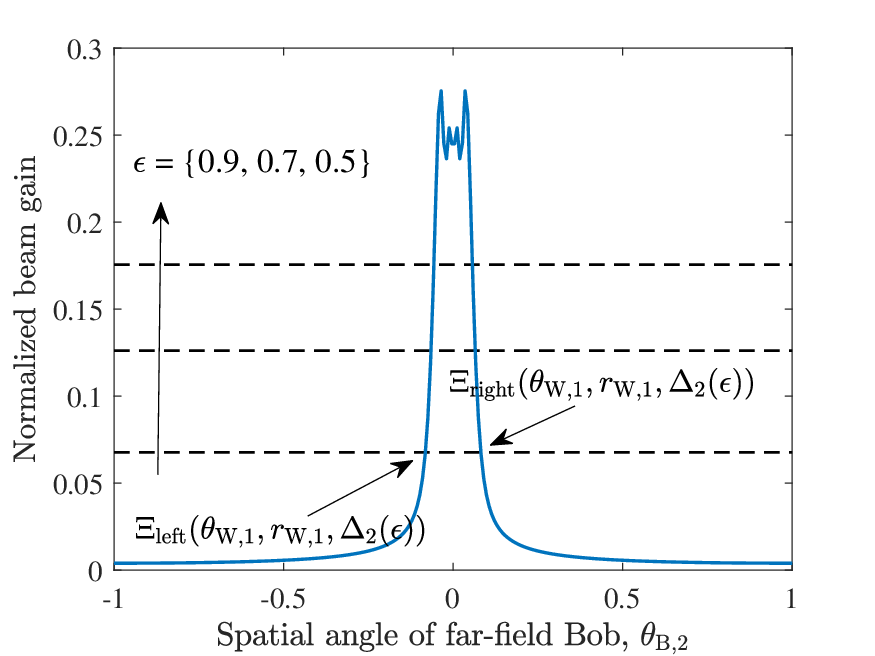}
		\caption{Covert transmission angle of far-field Bob with   $\theta_{{\rm W},1}= 0$ and $ r_{{\rm W},1} = 10$ m for fixed XL-array scheme.} \label{Fig:CTangle}
		\vspace{-16pt}
   \end{figure}
	\begin{example}
		\rm 
		In Fig.~\ref{Fig:CTangle}, we plot the covert transmission angle of far-field Bob when $\theta_{{\rm W},1}= 0$ and $ r_{{\rm W},1} = 10$ m for fixed XL-array systems. Herein, the smallest and largest spatial angles $\Xi_{\rm left}$ and $\Xi_{\rm right}$ w.r.t. $\{\theta_{{\rm W},1},r_{{\rm W},1},{\Delta}_{2}(\epsilon) \}$ can be obtained based on \textbf{Proposition~\ref{Pro:CTA}}. To satisfy the covert transmission condition for far-field Bob, the allowable spatial angle should be smaller than $\Xi_{\rm left}$ or larger than $\Xi_{\rm right}$ to ensure the channel correlation $\chi_{{\rm W}_{1},{\rm B}_{2} }$ lower than a given threshold (i.e., black dashed lines in Fig.~\ref{Fig:CTangle}). In addition, given a larger  $\epsilon$ (more stringent covert transmission condition), a larger spatial angle difference (a smaller $\Xi_{\rm left}$ and/or a larger $\Xi_{\rm right}$) is needed to ensure covertness.
	\end{example}
		
	For Case 3, when the near-field Bob locates at the same angle with near-field Willie, their channel correlation is affected by their ranges, which is characterized below.
	\begin{lemma}[Covert transmission condition of near-field Bob]\label{Lem:CTRB1}
		\rm 
		Given the location of near-field Willie, when $\theta_{{\rm B},1} = \theta_{{\rm W},1}$ and
		$\theta_{{\rm B},2} \notin {\mathcal{A}}_{\mu{\rm dB}} (\theta_{{\rm W},1},r_{{\rm W},1})$,
        the covert transmission condition of near-field Bob is given by
		\begin{align}\label{Exp:CorW1B1_condition}
			\chi_{{\rm W}_{1},{\rm B}_{1} } \le \sqrt{{\Gamma_{{\rm W}_{1}}(\epsilon)}/{P_{{\rm B},1}}  } \triangleq {\Delta}_{1}(\epsilon).
		\end{align}
	\end{lemma}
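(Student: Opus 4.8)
The plan is to specialize the general covertness inequality of \textbf{Lemma~\ref{Lemma:CT_condition}} to the Case~3 geometry $\mathbf{C}_{3}$, in exact analogy with the argument that yields \textbf{Lemma~\ref{lem:CT_conB2}} for far-field Bob. Recall that \textbf{Lemma~\ref{Lemma:CT_condition}} certifies covertness precisely when $\sum_{b=1}^{2} P_{{\rm B},b}(\chi_{{\rm W}_{1},{\rm B}_{b}})^2 \le \Gamma_{{\rm W}_{1}}(\epsilon)$. First I would invoke the defining feature of $\mathbf{C}_{3}$: since $\theta_{{\rm B},2} \notin \mathcal{A}_{\mu{\rm dB}}(\theta_{{\rm W},1},r_{{\rm W},1})$, and since the far-field Bob steering vector is well approximated by $\hat{\mathbf{a}}(\mathbf{q}_{\rm U},\theta_{{\rm B},2})$, \textbf{Definition~\ref{Def:3}} guarantees that the normalized beam gain $f(\mathbf{w}(\theta_{{\rm B},2}),\theta_{{\rm W},1},r_{{\rm W},1})$ falls below the $10$-dB threshold $\iota f_{\rm peak} = 0.1 f_{\rm peak}$, so that $\chi_{{\rm W}_{1},{\rm B}_{2}} \approx 0$.

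Next, substituting $\chi_{{\rm W}_{1},{\rm B}_{2}} \approx 0$ collapses the two-term sum to its single surviving near-field contribution, leaving $P_{{\rm B},1}(\chi_{{\rm W}_{1},{\rm B}_{1}})^2 \le \Gamma_{{\rm W}_{1}}(\epsilon)$. Dividing by the positive allocated power $P_{{\rm B},1}$ and taking the nonnegative square root—legitimate because $\chi_{{\rm W}_{1},{\rm B}_{1}}$ is defined as a modulus and hence nonnegative—yields $\chi_{{\rm W}_{1},{\rm B}_{1}} \le \sqrt{\Gamma_{{\rm W}_{1}}(\epsilon)/P_{{\rm B},1}} = \Delta_{1}(\epsilon)$, which is exactly the claimed condition in~\eqref{Exp:CorW1B1_condition}.

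Since the algebra is elementary, the only step carrying genuine content is the justification of the vanishing cross-correlation $\chi_{{\rm W}_{1},{\rm B}_{2}} \approx 0$. The subtlety is that Willie is a near-field node whereas Bob~2 is far-field, so the relevant quantity is the inner product between a near-field steering vector and a far-field beamformer; the main obstacle, were one to demand full rigor rather than the $\mu$-dB approximation, would be to quantify the residual correlation outside $\mathcal{A}_{\mu{\rm dB}}$ and propagate that small-but-nonzero term through the inequality. Consistent with the paper's use of the $\mu$-dB energy-spread support and with the analogous treatment in \textbf{Lemma~\ref{lem:CT_conB2}}, I would treat this residual as negligible, so the derivation reduces to the one-line specialization above.
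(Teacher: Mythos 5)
Your proposal is correct and matches the paper's intent: the paper itself omits the proof, stating only that it "can be easily obtained based on Lemma~\ref{Lemma:CT_condition}," and your argument—dropping the far-field term via $\chi_{{\rm W}_{1},{\rm B}_{2}}\approx 0$ (justified by $\theta_{{\rm B},2}\notin\mathcal{A}_{\mu{\rm dB}}$) and then isolating $\chi_{{\rm W}_{1},{\rm B}_{1}}$ by dividing by $P_{{\rm B},1}$ and taking the square root—is exactly that specialization, mirroring the treatment of Lemma~\ref{lem:CT_conB2}. Your remark on the neglected residual correlation is a fair caveat but consistent with the paper's own $\mu$-dB approximation convention.
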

	\begin{proof}
		The proof can be easily obtained based on Lemma~\ref{Lemma:CT_condition} and thus is omitted here for brevity. 
	\end{proof}

    Based on \textbf{Lemma~\ref{Lem:CTRB1}}, we obtain the range region of near-field Bob, where its transmission covertness can be guaranteed, namely, the covert transmission range.
	\begin{proposition}[Covert transmission range of near-field Bob]\label{Pro:CTR}
		\rm Let $ \beta_{\Delta} $ denote a nonnegative constant that satisfies $|G(\beta_{\Delta})| = \Delta $.
		Given the location of Willie  $\{\theta_{{\rm W},1},r_{{\rm W},1}\}$, when $\theta_{{\rm W},1} =\theta_{{\rm B},1}  $, the covert transmission range of Bob $1$ is given by
		\begin{align}
			\Big\{ r_{{\rm B},1} \le r_{{\rm W},1}\Big(\frac{1}{1 + \Pi r_{{\rm W},1}}\Big)\Big\} \cup
			\Big\{	r_{{\rm B},1} \ge r_{{\rm W},1}\Big(\frac{1}{1- \Pi r_{{\rm W},1}}\Big)  \Big\},
			\nonumber
		\end{align}
		where $\Pi = \frac{2\lambda (\beta_{{\Delta}_{1}(\epsilon)})^2}{N^2d^2 (1-\theta_{{\rm W},1}^2)}$. 
	\end{proposition}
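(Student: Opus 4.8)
The plan is to translate the correlation-based covertness condition of \textbf{Lemma~\ref{Lem:CTRB1}} into an explicit region for $r_{{\rm B},1}$ by inverting the Fresnel-type kernel $|G(\cdot)|$. First I would invoke \textbf{Lemma~\ref{Lem:CTRB1}}, which, under the hypotheses $\theta_{{\rm B},1}=\theta_{{\rm W},1}$ and $\theta_{{\rm B},2}\notin{\mathcal{A}}_{\mu{\rm dB}}(\theta_{{\rm W},1},r_{{\rm W},1})$, guarantees covertness precisely when $\chi_{{\rm W}_{1},{\rm B}_{1}}\le{\Delta}_{1}(\epsilon)$. Since near-field Bob and Willie share the same angle, I would then apply the equal-angle approximation of \textbf{Definition~\ref{Def:2}} with $\theta=\theta_{{\rm W},1}$, $r_{i}=r_{{\rm W},1}$, and $r_{j}=r_{{\rm B},1}$, so that $\chi_{{\rm W}_{1},{\rm B}_{1}}\approx|G(\beta)|$ with $\beta=\sqrt{\frac{N^2d^2(1-\theta_{{\rm W},1}^2)}{2\lambda}\bigl|\frac{1}{r_{{\rm W},1}}-\frac{1}{r_{{\rm B},1}}\bigr|}$. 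The covertness condition thereby reduces to the scalar inequality $|G(\beta)|\le{\Delta}_{1}(\epsilon)$.

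The second step is to invert the kernel. The key observation I would use is that $|G(\beta)|$ decreases from its value $1$ at the origin as $\beta$ grows, its main lobe behaving like a normalized Fresnel spiral, so that $|G(\beta)|\le{\Delta}_{1}(\epsilon)$ is equivalent to $\beta\ge\beta_{{\Delta}_{1}(\epsilon)}$, where $\beta_{\Delta}$ is the constant defined through $|G(\beta_{\Delta})|=\Delta$. Squaring $\beta\ge\beta_{{\Delta}_{1}(\epsilon)}$ and cancelling the common scaling factor then yields the range-separation inequality $\bigl|\frac{1}{r_{{\rm W},1}}-\frac{1}{r_{{\rm B},1}}\bigr|\ge\Pi$, with $\Pi=\frac{2\lambda(\beta_{{\Delta}_{1}(\epsilon)})^2}{N^2d^2(1-\theta_{{\rm W},1}^2)}$ exactly as stated.

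Finally I would unfold the absolute value into its two branches and solve each for $r_{{\rm B},1}$. The branch $\frac{1}{r_{{\rm W},1}}-\frac{1}{r_{{\rm B},1}}\le-\Pi$ gives $\frac{1}{r_{{\rm B},1}}\ge\frac{1+\Pi r_{{\rm W},1}}{r_{{\rm W},1}}$, i.e.\ $r_{{\rm B},1}\le r_{{\rm W},1}/(1+\Pi r_{{\rm W},1})$, whereas the branch $\frac{1}{r_{{\rm W},1}}-\frac{1}{r_{{\rm B},1}}\ge\Pi$ gives $r_{{\rm B},1}\ge r_{{\rm W},1}/(1-\Pi r_{{\rm W},1})$; their union is precisely the claimed covert transmission range. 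A point requiring care here is the reciprocal manipulation: taking reciprocals reverses the inequalities, and the outer (far-range) branch is meaningful only when $1-\Pi r_{{\rm W},1}>0$, a positivity condition I would flag as delimiting its validity.

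The main obstacle I anticipate is rigorously justifying the monotonicity used in the inversion. Because the Fresnel integrals $C(\beta)$ and $S(\beta)$ oscillate, $|G(\beta)|$ is strictly decreasing only across its main lobe and carries small sidelobe ripples afterward, so the equation $|G(\beta_{\Delta})|=\Delta$ may admit several roots. I would therefore argue that the operative root is the smallest (main-lobe) one, consistent with the convention already adopted for \textbf{Proposition~\ref{Pro:CTA}} regarding multiple solutions from power fluctuations, which makes $\beta_{{\Delta}_{1}(\epsilon)}$ well defined and the equivalence $|G(\beta)|\le{\Delta}_{1}(\epsilon)\Leftrightarrow\beta\ge\beta_{{\Delta}_{1}(\epsilon)}$ legitimate on the regime of interest.
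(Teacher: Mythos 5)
Your proposal is correct and follows essentially the same route as the paper's proof in Appendix~B: approximate $\chi_{{\rm W}_{1},{\rm B}_{1}}$ via \textbf{Definition~\ref{Def:2}} as $|G(\beta^{({\rm W}_1,{\rm B}_1)})|$, use the (approximately) decreasing behavior of $|G(\beta)|$ to convert the covertness condition into $\beta^{({\rm W}_1,{\rm B}_1)}\ge\beta_{{\Delta}_{1}(\epsilon)}$, and square to obtain $\bigl|\tfrac{1}{r_{{\rm W},1}}-\tfrac{1}{r_{{\rm B},1}}\bigr|\ge\Pi$. You are in fact slightly more complete than the paper, since you explicitly unfold the absolute value into the two branches of the stated region and flag the validity condition $1-\Pi r_{{\rm W},1}>0$, steps the paper leaves implicit.
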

	\begin{proof}
		Please refer to  Appendix~\ref{App:CT_region}
	\end{proof}
    
%   \begin{figure}[t]
%		\centering
%		\includegraphics[width=0.3\textwidth]{}
%		\caption{Covert transmission range of near-field Bob with $\theta_{{\rm B},1}=\theta_{{\rm W},1} = 0 $, $r_{{\rm W},1} = 25$ m for fixed XL-array scheme.} \label{Fig:Sec3-CTR_XLarray}
%		\vspace{-14pt}
%   \end{figure}
	In \textbf{Proposition~\ref{Pro:CTR}}, we have $ \frac{1}{1 + \Pi r_{{\rm W},1}} < 1 $ and $\frac{1}{1- \Pi r_{{\rm W},1}} > 1 $. This indicates that for ensuring covert transmission, the range difference between near-field Bob and Willie must exceed a certain threshold, the value of which is related to $\Pi$. Additionally, as $\beta_{{\Delta}_{1}(\epsilon)}$ increases with a decreasing ${\Delta}_{1}(\epsilon)$, a more stringent condition for covert transmission (i.e., lower ${\Delta}_{1}(\epsilon)$) in~\eqref{Exp:CorW1B1_condition} results in a larger $\Pi$, indicating that a greater range difference between near-field Bob and Willie is necessary.

%	\begin{example}
%		\rm 
%		In Fig.~\ref{Fig:Sec3-CTR_XLarray}, we plot the covert transmission range of near-field Bob with  $\theta_{{\rm B},1}=\theta_{{\rm W},1} = 0 $, and $r_{{\rm W},1} = 25$ m for fixed XL-array systems. It is observed that as  $\epsilon $ increases, the region of covert transmission range gradually reduces, ultimately vanishing when $ \epsilon = 0.8 $. 
%	\end{example}
	
	 For the more challenging Case 4,  the sum-correlation (i.e., $\sum_{b=1}^{2}P_{{\rm B},b}(\chi_{{\rm W}_{1},{\rm B}_{b} })^2$) between Willie and Bobs must be kept below a threshold corresponding to $\epsilon$ (i.e., $\Gamma_{{\rm W}_{1}}(\epsilon)$) for ensuring covert transmission, as mentioned in \textbf{Lemma~\ref{Lemma:CT_condition}}. Specifically, given the locations of near-field Bob and Willie, the covert transmission condition of far-field Bob is given by
     \begin{align}
	 \chi_{{\rm W}_{1},{\rm B}_{2} } \le  \sqrt{ \frac{\Gamma_{{\rm W}_{1}}(\epsilon) - P_{{\rm B},1} (\chi_{{\rm W}_{1},{\rm B}_{1} })^2  }{P_{{\rm B},2}}}  \triangleq \tilde{\Delta}_{2}(\epsilon).
     \end{align}
     By replacing ${\Delta}_{2}(\epsilon)$ in \textbf{Proposition~\ref{Pro:CTA}} with $\tilde{\Delta}_{2}(\epsilon)$,  we can obtain the covert transmission angle of far-field Bob for Case 4 as well. Since $\chi_{{\rm W}_{1},{\rm B}_{1} } > 0$, we always have  $\tilde{\Delta}_{2}(\epsilon) \le {\Delta}_{2}(\epsilon)$ for this case.
     This indicates that a larger angle difference is needed between Willie and far-field Bob to ensure covertness. 
     
	\subsubsection{\textbf{\underline{Movable XL-array}}}
	For movable XL-array systems, the channel correlation can be reduced by flexibly adjusting the positions of  subarrays. To this end, we formulate an optimization problem to minimize the average received power of information-bearing signals $f_{{\rm Co},1}$ for reducing the power leakage from both near-field and far-field Bobs.  Accordingly, the equivalent optimization problem is formulated as follows 
%	\begin{subequations}
		\begin{align}
			(\textbf{P3}):\;\min_{\mathbf{q}}~ \sum_{b=1}^{2}P_{{\rm B},b}\big(\chi_{{\rm W}_{1},{\rm B}_{b} } (\mathbf{q})\big)^2 
			\quad {\rm {s.t.}}~\eqref{C:XL-mMA1},~\eqref{C:XL-mMA2}. \nonumber
		\end{align} 
%	\end{subequations}
	Problem (\textbf{P3}) is a non-convex optimization problem due to the non-convex expression of $\chi_{{\rm W}_{1},{\rm B}_{b} } (\mathbf{q})$.
	To tackle this difficulty, efficient algorithms (e.g., DE) can be applied to solve Problem (\textbf{P3}). As the closed-form solution to Problem (\textbf{P3}) is intractable, we provide an example below to demonstrate the effectiveness of movable XL-array in reducing the channel correlations between Bobs and Willie to enhance covertness.
	\begin{figure}[t]
		\centering
		\includegraphics[width=0.3\textwidth]{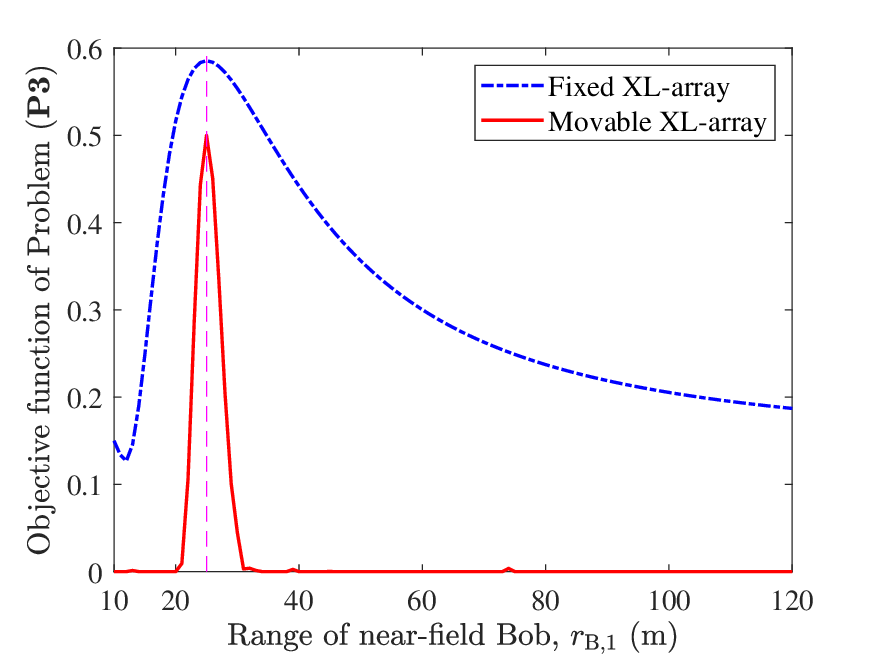}
		\caption{Objective function of Problem \textbf{(P3)} with
			$\theta_{{\rm B},1}= \theta_{{\rm B},2}=\theta_{{\rm W},1} = 0 $, $r_{{\rm W},1} = 25$ m, and $r_{{\rm B},2} = 150$ m.} \label{Fig:Sec3-Obj3}
		\vspace{-16pt}
	\end{figure}
	
	\begin{example}
		\rm 
		In Fig.~\ref{Fig:Sec3-Obj3}, we show the objective function of Problem \textbf{(P3)} versus the range of near-field Bob, when $\theta_{{\rm B},1}= \theta_{{\rm B},2}=\theta_{{\rm W},1} = 0 $, $r_{{\rm W},1} = 25$, and $r_{{\rm B},2} = 150$ m. It is observed that, for fixed XL-array, the objective function of Problem \textbf{(P3)} is larger than $0.1$, due to the strong correlation between Willie and Bobs. 
		In contrast, for  movable XL-array, the objective function of Problem \textbf{(P3)} significantly reduces and is close to zero. However, when the range of Willie  is between $20$ m and $30$ m, the $\sum_{b=1}^{2}P_{{\rm B},b}\big(\chi_{{\rm W}_{1},{\rm B}_{b} } (\mathbf{q})\big)^2 $ is relatively high. This is due to the significantly strong channel correlation between near-field Willie and Bob, which is difficult to address under the array aperture constraint.
	\end{example}
	\vspace{-12pt}
	\subsection{Trade-off between Rate and Covertness Performance}
	The performance analysis in Sections~\ref{Sec:III-A} and~\ref{Sec:III-B} show that movable XL-array can be employed to improve either the achievable sum-rate or to enhance transmission covertness. However, for covert communication in mixed-field systems, how to improve both communication performance and transmission covertness is more challenging, which needs to balance trade-off between them. 
	\begin{itemize}
		\item {\bf(Subarray movement)} For sum-rate performance, the subarray movement should be designed to minimize the channel correlations among Bobs. Nevertheless, for enhancing transmission covertness, the main principle of subarray movement adjustment is to minimize the channel correlations between Bobs and Willies.
	\item {\bf(Power allocation)} To maximize the achievable sum-rate, all available transmit power should be utilized. In contrast, to ensure covertness, using smaller transmit power only can reduce the risk of being detected.	
	\end{itemize}
	
	\section{Problem Formulation and Proposed Solution}\label{Sec:IV}
	To investigate the performance of MA-enhanced mixed-field covert communication systems, in this section, we formulate an optimization problem to maximize the achievable sum-rate under constraints on covertness, hybrid beamforming, transmit power, and subarray movement region. An efficient algorithm is then proposed to solve this non-convex problem.
	
	\vspace{-6pt}
	\subsection{Problem Formulation}
	We aim to maximize the covert transmission rate by jointly designing the hybrid beamforming matrices $\{\mathbf{F}_{\rm A},\mathbf{F}_{\rm D}\}$ and the positions of movable subarrays $\mathbf{q}$. This optimization problem can be formulated~as\footnote{In this work, we focus on maximizing the achievable sum-rate without enforcing per-user QoS constraints, while the proposed algorithm can be extended to scenarios where such per-user QoS requirements are considered.}
	\begin{subequations}
		\begin{align}
			(\textbf{P4}):\;\max_{\mathbf{q}, \mathbf{F}_{\rm A}, \mathbf{F}_{\rm D}}&~ \sum_{b=1}^{B} \varpi_{b} R_{{\rm B},b}(\mathbf{q},\mathbf{F}_{\rm A},\mathbf{F}_{\rm D}) \nonumber  \\
			{\rm {s.t.}}&~ f_{{\rm Co},w}(\mathbf{q},\mathbf{F}_{\rm A},\mathbf{F}_{\rm D}) \le \tilde{\sigma}_{\rm w}^2 \left(   \frac{\rho^{2\varsigma} - 1}{\rho}\right), \forall w \in \mathcal{W},  \label{C:Covert} \\
			&~ |  [\mathbf{F}_{\rm A}]_{n,b} |=1, ~~~\forall n\in \mathcal{N},~b\in \mathcal{B},\label{C:Phase} \\
			&~ || \mathbf{F}_{\rm A} \mathbf{F}_{\rm D}  ||_F^2 \le P, \label{C:Power} \\
			&~ \eqref{C:XL-mMA1},~\eqref{C:XL-mMA2}, \nonumber
		\end{align} 
	\end{subequations}
	where $\varpi_{b}$ is the weighting factor. Herein, \eqref{C:Covert} ensures covert transmissions between Alice and Bobs, \eqref{C:Phase} is the unit-modulus constraint for analog beamforming, and \eqref{C:Power} limits the maximum transmit power. 
	
    Problem (\textbf{P4}) is a non-convex optimization problem, which is generally difficult to solve. In particular, the conventional weighted sum-rate maximizing methods (e.g., weighted MMSE and fractional programming (FP) methods) cannot be directly applied to solve this problem due to the non-convex covertness constraints for each Willie, as well as the complicated constraints related to subarray movement $\mathbf{q}$.

	\vspace{-10pt}
	\subsection{Proposed Algorithm to Solve \textbf{(P4)}}
	To solve Problem (\textbf{P4}), we first transform it into two subproblems below, corresponding to an inner problem for optimizing the hybrid beamforming matrices $\{\mathbf{F}_{\rm A},\mathbf{F}_{\rm D}\}$ given positions of subarrays, and an outer problem for further optimizing the positions of subarrays $\mathbf{q}$.
	
	\underline{\textbf{Inner problem}}: Given any feasible positions of subarrays $\mathbf{q}$, the inner problem of beamforming optimization is given by  
	\begin{align}
		(\textbf{P5}):\; \max_{\mathbf{F}_{\rm A},\mathbf{F}_{\rm D}}&\quad \sum_{b=1}^{B} \varpi_{b} R_{{\rm B},b}(\mathbf{q},\mathbf{F}_{\rm A},\mathbf{F}_{\rm D}) \nonumber  \\
		{\rm {s.t.}}&\quad \eqref{C:Covert},~\eqref{C:Phase},~\eqref{C:Power}.  \nonumber
	\end{align}
    
	\underline{\textbf{Outer problem}}: 
	Let $\hat{\mathbf{F}}_{\rm A}(\mathbf{q})$ and $\hat{\mathbf{F}}_{\rm D}(\mathbf{q})$ denote the obtained solution to Problem (\textbf{P5}). The outer problem of subarray-position optimization is formulated as  
	\begin{align}
		(\textbf{P6}):\; \max_{\mathbf{q}}&\quad \sum_{b=1}^{B} \varpi_{b} R_{{\rm B},b}(\mathbf{q},\hat{\mathbf{F}}_{\rm A}(\mathbf{q}),\hat{\mathbf{F}}_{\rm D}(\mathbf{q})) \nonumber  \\
		{\rm {s.t.}}&\quad \eqref{C:XL-mMA1},~\eqref{C:XL-mMA2}.  \nonumber
	\end{align}

	\subsubsection{Proposed solution to Problem (\textbf{P5})} Given the positions of subarrays $\{\mathbf{q}\}$, Problem (\textbf{P5}) is still hard to solve directly due to the covert transmission constraint~\eqref{C:Covert} and unit-modulus constraint~\eqref{C:Phase}. In addition, the coupling of hybrid beamforming matrices in both the objective function and constraints of (\textbf{P5}) makes the problem even more challenging. 
	
	To address these issues, we adopt an efficient and low-complexity two-stage beamforming design framework to obtain a high-quality solution~\cite{LiuTSBeam,Wangzl}. 

	\textbf{Stage 1} (Analog beamforming):
	First, the analog beamforming matrix $\mathbf{F}_{\rm A}$ is designed to maximize the received signal powers at individual Bobs. Specifically, based on the LoS path of each Bob, the $b$-th column of $\hat{\mathbf{F}}_{\rm A} (\mathbf{q})$, denoted by $\hat{\mathbf{f}}_{{\rm A},b}(\mathbf{q})$, can be obtained as\footnote{In high-frequency band scenarios, the LoS channel is much more dominant than the NLoS components. Therefore, in this work, the analog beamforming design is based on the LoS channel, which has been shown to achieve performance comparable to that of the fully digital approximation (FDA)-based hybrid beamforming scheme~\cite{Wangzl}.} 
	\begin{align}\label{Exp:ana_des}
		\hat{\mathbf{f}}_{{\rm A},b}(\mathbf{q}) =\sqrt{N} \mathbf{a}(\mathbf{q},\theta_{{\rm B},b}, r_{{\rm B},b}), \forall b \in \mathcal{B}.
	\end{align}
	
	\textbf{Stage 2} (Digital beamforming design): 
    Given the analog beamforming  $\hat{\mathbf{F}}_{\rm A} $, we then optimize the digital beamforming matrix. Specifically, with the obtained $\hat{\mathbf{F}}_{\rm A} $, the effective channels for all Bobs and Willies are given by
	\begin{subequations}
		\begin{align}
			\mathbf{e}_{{\rm B},b}(\mathbf{q}) &= \hat{\mathbf{F}}^H_{\rm A}(\mathbf{q}) \mathbf{h}_{{\rm B},b}(\mathbf{q}),~\forall b \in \mathcal{B}, \\
			\mathbf{e}_{{\rm W},w}(\mathbf{q}) &= \hat{\mathbf{F}}^H_{\rm A}(\mathbf{q}) \mathbf{h}_{{\rm W},w}(\mathbf{q}),~\forall w \in \mathcal{W}.
		\end{align}
	\end{subequations}
	With effective channels, the achievable rate at Bob $b$ and the covert transmission constraint for Willie $w$ can be rewritten as
	\begin{subequations}
		\begin{align}
			&\hat{R}_{b}(\mathbf{q},\mathbf{F_{\rm D}}) = \log_2\left(  1 + \frac{ |\mathbf{e}^H_{{\rm B},b}(\mathbf{q})  \mathbf{f}_{{\rm D},b}   |^2}{ \sum_{i=1,i\neq b}^{B} |\mathbf{e}^H_{{\rm B},b}(\mathbf{q})  \mathbf{f}_{{\rm D},i}   |^2 + \sigma_{{\rm B}}^2  } \right), \nonumber \\
			&~~~~~~~~~~~~~~~~~~~~~~~~~~~~~~~~~~~~~~~~~~~~~~~~~~~~~\forall b \in \mathcal{B}, \\
			&\hat{f}_{{\rm Co},w}(\mathbf{q},\mathbf{F_{\rm D}})   = \sum_{b=1}^{B}|\mathbf{e}^H_{{\rm W},w}(\mathbf{q}) \mathbf{f}_{{\rm D},b} |^2 \le \tilde{\sigma}_{\rm w}^2 \left(   \frac{\rho^{2\varsigma} - 1}{\rho}\right) , \nonumber \\
			&~~~~~~~~~~~~~~~~~~~~~~~~~~~~~~~~~~~~~~~~~~~~~~~~~~~\forall w \in \mathcal{W}, \label{C:Covert_new}
		\end{align}
	\end{subequations}
	respectively. Based on the above, the inner problem of beamforming optimization reduces to the following problem for the digital beamforming optimization
	\begin{align}
		(\textbf{P7}):\; \max_{\mathbf{F}_{\rm D}}
		&\quad \sum_{b=1}^{B} \varpi_{b} \hat{R}_{{\rm B},b}(\mathbf{q},\mathbf{F}_{\rm D}) \nonumber  \\
		{\rm {s.t.}}&\quad ~\eqref{C:Power},~\eqref{C:Covert_new}. \nonumber 
	\end{align}
Since the objective function in Problem (\textbf{P7}) is non-convex w.r.t. $\mathbf{F}_{\rm D}$, it is still challenging to optimize directly. To tackle this difficulty, we employ the SCA technique~\cite{MyTPM} to obtain a concave surrogate function for approximating the non-convex terms, i.e., $\hat{R}_{{\rm B},b}$ in Problem (\textbf{P7}). Specifically, the surrogate function of $\hat{R}_{{\rm B},b}$, denoted by $\tilde{R}_{{\rm B},b}$, is given by
	\begin{align}
		\tilde{R}_{{\rm B},b} & \triangleq \frac{1}{\ln2} \Bigg(  
		\ln  \bigg( 1+\frac{| {\zeta}_{b}^{\left(k \right) } |^2}{  {\vartheta}_{b}^{\left(k \right) }  } \bigg)     
		- \frac{| {\zeta}_{b}^{\left(k \right) } |^2}{  {\vartheta}_{b}^{\left(k \right) }  } 
		\nonumber \\
		& + \frac{2\mathcal{R}\left\lbrace (  {\zeta}_{b}^{\left(k \right) } )^H  {\zeta}_{b}  \right\rbrace }{{\vartheta}_{b}^{\left(k \right) }}
		-\frac{| {\zeta}_{b}^{\left(k \right) } |^2 \big(  | {\zeta}_{b} |^2  + {\vartheta}_{b}  \big)  }
		{ {\vartheta}_{b}^{\left(k \right) } \big(| {\zeta}_{b}^{\left(k \right) } |^2+   {\vartheta}_{b}^{\left(k \right) }\big) }
		\Bigg),
	\end{align}
	where ${\zeta}_{b} = \mathbf{e}^H_{{\rm B},b}(\mathbf{q})  \mathbf{f}_{{\rm D},b}$,  ${\vartheta}_{b} = \sum_{i=1,i\neq b}^{B} |\mathbf{e}^H_{{\rm B},b}(\mathbf{q})  \mathbf{f}_{{\rm D},i}   |^2 + \sigma_{{\rm B}}^2$, $ {\zeta}_{b}^{(k)} = \mathbf{e}^H_{{\rm B},b}(\mathbf{q})  \mathbf{f}_{{\rm D},b}^{(k)} $, and ${\vartheta}_{b}^{(k)} = \sum_{i=1,i\neq b}^{B} |\mathbf{e}^H_{{\rm B},b}(\mathbf{q})  \mathbf{f}_{{\rm D},i}^{(k)}   |^2 + \sigma_{{\rm B}}^2$. Herein, $ \mathbf{F}_{{\rm D}}^{(k)}  = [\mathbf{f}_{{\rm D},1}^{(k)},\ldots,\mathbf{f}_{{\rm D},B}^{(k)} ] $ is the digital beamforming matrix obtained in the $k$-th SCA iteration. By constructing a tight concave lower bound for $\hat{R}_{{\rm B},b}$ (i.e., $\hat{R}_{{\rm B},b} \ge \tilde{R}_{{\rm B},b}$), the original Problem (\textbf{P7}) can be approximated by the following convex optimization problem
	\begin{align}
		(\textbf{P8}):\; \max_{\mathbf{F}_{\rm D}}~\sum_{b=1}^{B} \varpi_{b} \tilde{R}_{{\rm B},b}(\mathbf{q},\mathbf{F}_{\rm D}) 
		\quad{\rm {s.t.}}~\eqref{C:Power},~\eqref{C:Covert_new}, \nonumber 
	\end{align}
	which can be efficiently solved by using e.g., CVX solvers. By iteratively updating $ \mathbf{F}_{{\rm D}}^{(k)} $ until convergence is achieved, we obtain a suboptimal digital beamforming matrix $\hat{\mathbf{F}}_{\rm D}(\mathbf{q})$. 
	
	\subsubsection{Proposed solution to Problem (\textbf{P6})}
	For the outer problem, there generally lacks a tractable closed-form expression for the optimized digital beamforming matrix  $\hat{\mathbf{F}}_{\rm D}$, which is obtained by solving the inner Problem \textbf{(P5)}. Moreover, the non-convex expressions of near-field steering vectors w.r.t. the position vector~$\mathbf{q}$ make the problem more difficult. 
	Although some existing methods, such as projected gradient ascent (PGA)~\cite{Wuqq_MA_secure}, can be employed to optimize the position vector, this approach is liable to converge to a local optimum or even a low-quality solution.  To address this issue, we propose a customized DE algorithm to obtain a high-quality solution to Problem \textbf{(P6)}. In the following, the framework of the proposed DE algorithm for the position optimization of movable XL-array is presented.
    
	\textbf{Population initialization}:
	At the beginning of DE algorithm, a population of $G$ feasible position vectors for the movable XL-array are randomly generated within the predefined solution space, which are expressed as
	\begin{align}\label{Exp:pop_gen}
		\mathcal{G}_{\rm q}^{(0)} = \Big\{\mathbf{q}_{1}^{(0)},\ldots,\mathbf{q}_{g}^{(0)},\ldots,\mathbf{q}_{G}^{(0)}\Big\}, \forall g\in \mathcal{G}, 
	\end{align}
	where $\mathbf{q}_{g}^{(0)}$ denotes the $g$-th candidate position vector of movable XL-array, $ \mathcal{G} = \{1,2,\ldots,G\}$ represents the population index set, and $G$ is the population size. Let $ \mathbf{q}_{g}^{(i)} $ denote the $g$-th position vector in the $i$-th DE iteration, where $i\in \mathcal{I} \triangleq \{1,2,\ldots,I\}$ with $I$ representing its maximum iteration number and $\mathcal{I}$ denoting its  index set.
	
	\textbf{Fitness function design}:
	For each candidate position vector $ \mathbf{q}_{g}^{(i)} $, the corresponding hybrid beamforming matrices $\big\{\hat{\mathbf{F}}_{\rm A}(\mathbf{q}_{g}^{(i)}),\hat{\mathbf{F}}_{\rm D}(\mathbf{q}_{g}^{(i)})\big\}$ can be obtained by solving Problem (\textbf{P5}). 
	Based on the above, the quality of  $\mathbf{q}_{g}^{(i)}$ is evaluated based on the following fitness function~\cite{das2010differential}
	\begin{align}\label{Exp:fit_fun}
		\mathcal{F}( \mathbf{q}_{g}^{(i)}) &= \sum_{b=1}^{B} R_{{\rm B},b}(\mathbf{q}_{g}^{(i)},\hat{\mathbf{F}}_{\rm A}(\mathbf{q}_{g}^{(i)}),\hat{\mathbf{F}}_{\rm D}(\mathbf{q}_{g}^{(i)}))
		\nonumber \\
		&~~~~~~~~~~-\eta \mathcal{T}_{2}( \mathbf{q}_{g}^{(i)} ) | \mathcal{T}_{1}( \mathbf{q}_{g}^{(i)} ) |.
	\end{align}
	Note that the first term of~\eqref{Exp:fit_fun} corresponds to the objective function in Problem (\textbf{P6}), and the second term is the penalty function introduced to ensure constraint~\eqref{C:XL-mMA1}. Specifically, for any position vector $\mathbf{q}$, the set $\mathcal{T}_{1}( \mathbf{q})$ is defined as the collection of all element pairs within $\mathbf{q}$ that violate constraint~\eqref{C:XL-mMA1}, which is given by 
	% \begin{align}
		$\mathcal{T}_{1}( \mathbf{q} ) =\big\{({q}_{i},{q}_{j}) \big| |{q}_{i}-{q}_{j}|< \tilde{N} d,  \forall i \neq j\in \mathcal{M} \big\}.$
	% \end{align}
	$\mathcal{T}_{2}( \mathbf{q} )$ characterizes the extent of constraint violation for $\mathbf{q}$, which is defined as
	% \begin{align}
		$\mathcal{T}_{2}( \mathbf{q}) = \sum_{ ({q}_{i},{q}_{j}) \in \mathcal{T}_{1}( \mathbf{q} ) } (\tilde{N}d - |{q}_{i}-{q}_{j} |  ). $
	% \end{align}
	Additionally, $| \mathcal{T}_{1}( \mathbf{q}_{g}^{(i)} ) |$ represents the cardinality  of $ \mathcal{T}_{1}( \mathbf{q}_{g}^{(i)} )$ and $ \eta>0 $  is a scaling factor.
	
	\textbf{DE operations}:
	Next, the following three operations, namely, \emph{mutation}, \emph{crossover}, and \emph{selection}, are performed to update the individuals (i.e., candidate positions of subarrays) in each DE iteration. 
    
    \emph{Mutation}: Generate a mutant vector $\mathbf{v}_{g}^{(i)}$ via
		\begin{align}\label{Exp:mut}
			\mathbf{v}_{g}^{(i)} = \mathbf{q}_{\rm best}^{(i-1)}+ F(\mathbf{q}_{r_{1}}^{(i-1)}-\mathbf{q}_{r_{2}}^{(i-1)}),
		\end{align}
		where $\mathbf{q}_{r_{1}}^{(i-1)}$ and $\mathbf{q}_{r_{2}}^{(i-1)}$ are distinct vectors randomly selected from  $\mathcal{G}_{\rm t}^{(i-1)}$ with $r_{1}$ and $r_{2}$ satisfying $r_{1} \neq r_{2} \neq g, \forall g\in \mathcal{G}$. Additionally, $\mathbf{q}_{\rm best}^{(i-1)}$ refers to the position vector with a maximum value of fitness function, and $F$ is the mutation factor for controlling the global search capability and the convergence rate of DE algorithm.
        
		\emph{Crossover}: After mutation operation, the crossover operation is to produce a trial vector $\mathbf{u}_{g}^{(i)} $ via genetic exchanges between the mutant individual $\mathbf{v}_{g}^{(i)}$ and current individual $ \mathbf{q}_{g}^{(i-1)} $. 
		The process of crossover operation can be mathematically expressed as
		\begin{equation}\label{Exp:cro1}
			\mathbf{u}_{g}^{(i)} = \left\{
			\begin{aligned}
				&\mathbf{v}_{g}^{(i)},~~~\textrm{if}~ {\rm rand}(1) < C_{R}~\textrm{or}~g =G_{\rm rand}^{(i)}, \\
				&\mathbf{q}_{g}^{(i-1)},~\textrm{otherwise},
			\end{aligned}
			\right.
		\end{equation}
		where  $C_{R} \in [0,1]$ is a crossover factor, $G_{\rm rand}^{(i)}$ is a random index for ensuring at least one crossover, and  $ {\rm rand}(1) \sim \mathcal{U}(0,1)$ represents a random variable. To satisfy the boundary constraint~\eqref{C:XL-mMA2}, the trial vectors are constrained within $\mathcal{C}_{\mathbf{q}}$, which is mathematically expressed as
        \begin{align}\label{Exp:cro2}
        \big[\mathbf{u}_{g}^{(i)}\big]_{m}=\max\{
        \min\{\big[\mathbf{u}_{g}^{(i)}\big]_{m},q_{\max}\},q_{\min}
        \big\}.
        \end{align}

		 \emph{Selection}: After crossover operation, the trial position vectors are evaluated using the fitness function to generate the next population, which is determined by
		\begin{equation}\label{Exp:sel} 
			\mathbf{q}_{g}^{(i)}=\left\{
			\begin{aligned}
				&\mathbf{u}_{g}^{(i)}, &&\textrm{if}~~ \mathcal{F} ( \mathbf{u}_{g}^{(i)})> \mathcal{F} ( \mathbf{q}_{g}^{(i-1)} ),  \\
				&\mathbf{q}_{{g}}^{(i-1)}, &&\textrm{otherwise}.  \\
			\end{aligned}
			\right.
		\end{equation}	

	After $I$ iterations, a suboptimal solution to Problem (\textbf{P6}), denoted by $\mathbf{q}^{*}$, is obtained, i.e., $\mathbf{q}^{*} = \mathbf{q}_{\rm best}^{(I)}$.
	
\begin{remark}[Algorithm convergence and computational complexity] \rm 
First, consider the convergence of  algorithm for solving Problem \textbf{(P4)}. For the inner problem of optimizing $\mathbf{F}_{\rm D}^{(k)}$, we have 
		$\sum_{b=1}^{B} \varpi_{b} \tilde{R}_{{\rm B},b}(\mathbf{q},\mathbf{F}_{\rm D}^{(k)}) \ge \sum_{b=1}^{B} \varpi_{b} \tilde{R}_{{\rm B},b}(\mathbf{q},\mathbf{F}_{\rm D}^{(k-1)})$,
	 since the objective function of Problem \textbf{(P8)} is non-decreasing after each iteration. Thus, the convergence of inner problem is guaranteed. For the DE algorithm to solve the outer problem, the update of $\mathbf{q}_{\rm best}^{(i)}$ over iterations satisfies
		$\mathcal{F}( \mathbf{q}_{\rm best}^{(i)}) \ge \mathcal{F}( \mathbf{q}_{\rm best}^{(i-1)})$, the convergence of proposed algorithm can be guaranteed.
	%Since there exists an upper bound of the achievable sum-rate, the convergence of proposed algorithm can be guaranteed.
	Next, the computational complexity of proposed algorithm is analyzed, which is determined by the inner problem of optimizing $\mathbf{F}_{\rm D}$ and the outer problem of updating $ \mathbf{q}$. The complexity of interior point method to solve Problem \textbf{(P8)} is in the order of $\mathcal{O}(B^6\sqrt{2(W+1)}\log(1/\xi))$, where $\xi$ is required precision of CVX solver~\cite{MyTPM}. By denoting $G I$ the number of fitness evaluations of the DE algorithm, the total computational complexity of proposed algorithm is in the order of $\mathcal{O}( I_{1} G  I B^6\sqrt{2(W+1)}\log(1/\xi))$, where $I_{1}$ represents the number of SCA iterations.
\end{remark}	

\vspace{-6pt}	
\section{Numerical Results}\label{Sec:Sim}
	In this section, numerical results are presented to showcase the efficiency of proposed movable XL-array scheme in enhancing the performance of mixed-field covert communication. 
	
	\vspace{-10pt}
	\subsection{System Setup and Benchmark Schemes}
	The system parameters are set as follows.
    The XL-array operating at $f = 30$ GHz frequency bands consists of $ N = 256$ antennas, which are divided into $ M = 8$ subarrays. For covert communications, the XL-array (Alice) serves $ B= 2 $ Bobs with the existence of $W=2$ Willies. A challenging scenario is considered, where the spatial angles of Bob $1$ and Willie $1$ are set as $\theta_{{\rm B},1} = \theta_{{\rm W},1} = 0 $ and the spatial angles of Bob $2$ and Willie $2$ are randomly and uniformly distributed as  $\theta_{{\rm W/B},2}\sim \mathcal{U}(-0.05,0.05)$. Furthermore, the ranges of Bobs and Willies are modeled as $r_{{\rm B},1}\sim\mathcal{U}(25,35)$ m, $r_{{\rm B},2}\sim \mathcal{U}(150,160)$ m, and $r_{{\rm W},w}\sim \mathcal{U}(15,25)$ m. For wireless channels, we consider the general Rician fading model. Specifically, the complex-valued path gain for the LoS path is modeled as $ g = \sqrt{\frac{\kappa}{\kappa+1}} \frac{\sqrt{\hbar}}{r} e^{-\jmath\frac{2\pi r }{\lambda}}$, where $\kappa = 10$ dB represents the Rician factor and $\hbar = ({\lambda}/{4\pi})^2$ is the reference channel gain. The complex-valued path gain for the NLoS path is modeled as $g^{(\ell)} \sim \mathcal{CN}(0, \bar{\sigma}_{\ell}^2)$, where $\bar{\sigma}_{\ell} = \sqrt{\frac{1}{\kappa+1}}\frac{\sqrt{\hbar}}{r}$~\cite{WuDFT}. The weighting factors are set proportional to ranges of Bobs for fairness comparison, which are normalized as ${\varpi_{1}} = \frac{ r_{{\rm B},1}}{r_{{\rm B},1} + r_{{\rm B},2}}$, $\varpi_{2} =  \frac{ r_{{\rm B},2}}{r_{{\rm B},1} + r_{{\rm B},2}}$~\cite{Guo2020WSR}. 
        Unless otherwise specified, other system parameters are set as $L = 4$, $ P = 1 $ W, $\rho = 3 $ dB, $ \varsigma = 0.1$, $\sigma_{{\rm B}}^2  = \tilde{\sigma}_{\rm w}^2 =-90 $ dBm, $G = 50$, $I = 50$, $F = 0.3$, $C_{R} = 0.9$, and $\eta = 1000$, $\mathcal{C}_{\mathbf{q}} = [-(N-1)d, (N-1)d] $ m.
    
	For performance comparison, the following benchmark schemes are considered. 
	\begin{itemize}
		
		\item \emph{Fixed XL-array scheme}: For this scheme, the fixed XL-array is considered, and the hybrid beamforming matrices are designed via the proposed two-stage framework to maximize the covert transmission rate.% under constraint~\eqref{C:Covert}.
		
	   \item \emph{Fixed XL-array scheme without  (w.o.) covertness constraint}: For this scheme, the covertness constraint (i.e.,~\eqref{C:Covert}) is not considered for fixed XL-array, while hybrid beamforming matrices are designed via the proposed two-stage framework.
		
		\item \emph{Random movable XL-array (RanM XL-array)}: For this scheme, $G$ random candidates of $\mathbf{q}$ are generated, and the best position vector~$\mathbf{q}_{\rm best}$ is selected to maximize the covert transmission rate via the two-stage framework.
		
		\item \emph{RanM XL-array w.o. covertness constraint}: Similar to \emph{RanM XL-array} scheme, this scheme aims to maximize the achievable sum-rate based on random candidates of $\mathbf{q}$ without considering covertness constraint. 
	\end{itemize}

	In addition to the proposed low-complexity two-stage hybrid beamforming method, we also adopt the FDA-based method in~\cite{Wangzl} (named as hybrid~\cite{Wangzl}) for performance evaluation.
	
	\subsection{Performance Analysis}
	 \subsubsection{Convergence of the Proposed Algorithm}
	% \begin{figure}[t]
	% 	\centering
	% 	\includegraphics[width=0.35\textwidth]{Sec5-Convergence1.eps}
	% 	\caption{Convergence of the proposed DE algorithm in solving Problems \textbf{(P1)} and \textbf{(P3)}.} \label{Fig:Sec5-Convergence1}
	% 	\vspace{-14pt}
	% \end{figure}
    
    \begin{figure}[t]
    	\centering
    	% 左侧子图
    	\begin{subfigure}{0.49\linewidth}
    		\centering
    		\includegraphics[width=1\linewidth]{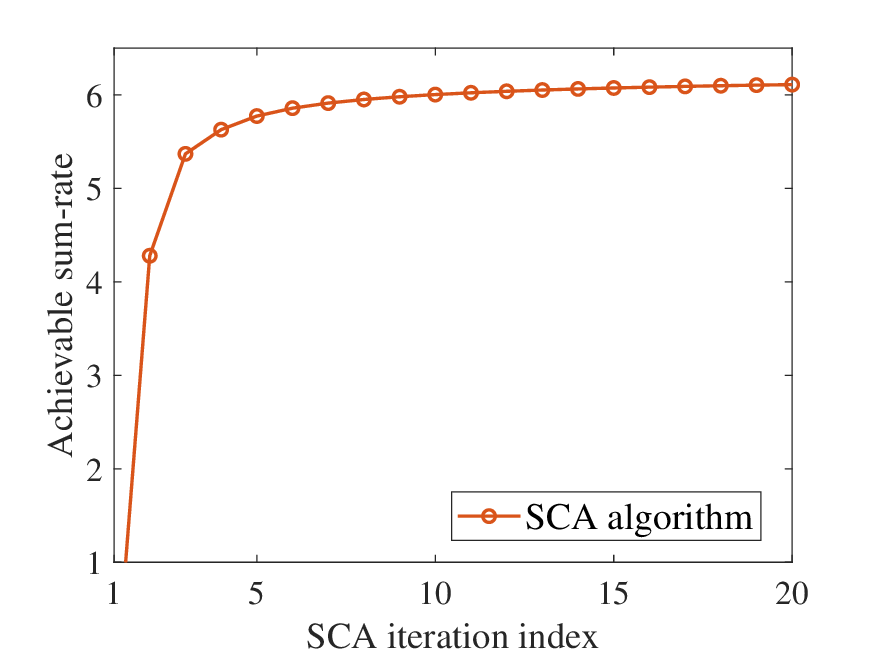}
    		\caption{Convergence of the proposed SCA technique.}
    		\label{Fig:SCA}
    	\end{subfigure}
    	\begin{subfigure}{0.49\linewidth}
    		\vspace{0em}
    		\centering
    		\includegraphics[width=1\linewidth]{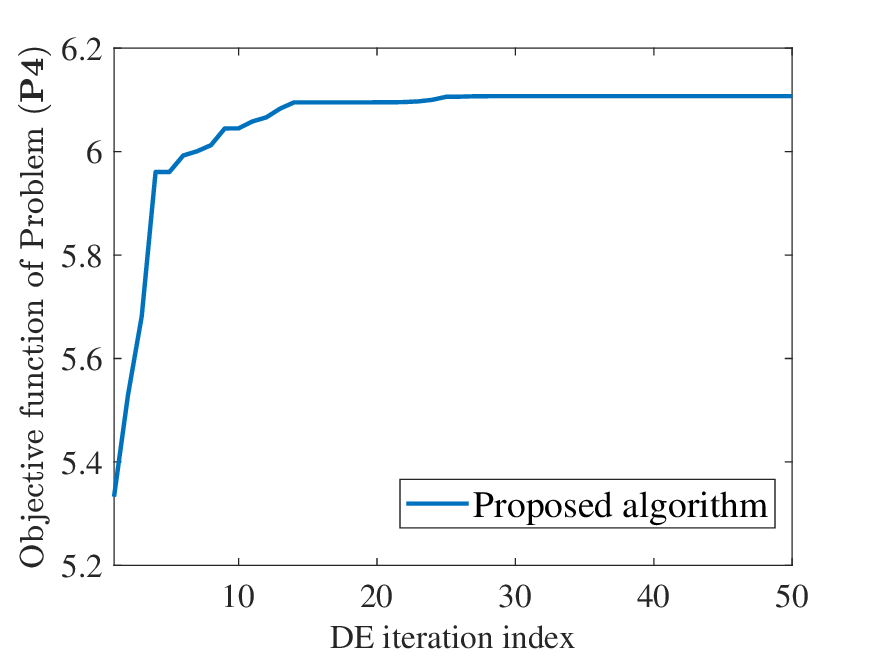}
    		\caption{Convergence of the proposed DE algorithm.}
    		\label{Fig:DE}
    		%\vspace{0.9em}
    	\end{subfigure}
    	\caption{Convergence of the proposed algorithm.}
    	\vspace{-20pt}
    	\label{Fig:Sec5-Convergence2}
    \end{figure}
    
    The convergence of proposed SCA technique and DE algorithm for solving Problem \textbf{(P4)} is plotted in Fig.~\ref{Fig:Sec5-Convergence2}. As shown in Fig.~\ref{Fig:Sec5-Convergence2}(a), the achievable sum-rate obtained from SCA algorithm converges to a stationary point after around  $20$ iterations. Additionally, Fig.~\ref{Fig:Sec5-Convergence2}(b) shows that the proposed DE Algorithm converges to a stationary point within about 30 DE iterations.
	
	\subsubsection{Effect of Covertness Constraint}

	To show the effect of covertness constraint, we plot in Fig.~\ref{Fig:Sec5-Rate_vs_MDEP} the curves of achievable sum-rate versus the covertness constant $\varsigma$ by different schemes, where a smaller $\varsigma$ incurs a more stringent covert transmission condition.  
	Several key observations are made as follows. First, the achievable sum-rates of all schemes increase with $\varsigma$, because a more relaxed covertness constraint allows for a more effective beamforming design and power allocation.
	Second, the achievable sum-rates of all schemes
	with FDA-based hybrid beamforming method~\cite{Wangzl} exhibit significantly better covert rate performance than the RanM XL-array and Fixed XL-array schemes, owing to its effective beamforming design. 
	Third, the proposed two-stage beamforming design achieves close rate performance with the FDA-based method, even under the highly stringent covertness constraint (i.e., $\varsigma = 0.02$). This phenomenon confirms the efficacy of movable XL-array in reducing channel correlation, thereby enabling the two-stage method to achieve high-quality solutions.
	
	\subsubsection{Effect of Maximum Transmit Power}
    In Fig.~\ref{Fig:Sec5-Rate_vs_P}, we plot the achievable sum-rate versus the maximum transmit power.  
    Among all schemes without considering the covertness constraint, the proposed movable XL-array scheme achieves a rate performance close to its upper bound (i.e., both inter-user interference and the covertness constraint are neglected). This observation indicates that position optimization effectively reduces the channel correlations among mixed-field Bobs, thereby improving the achievable sum-rate compared with other schemes. Next, for all schemes accounting for the covertness constraint, a similar phenomenon is observed that the proposed scheme achieves the highest covert transmission rate. This is attributed to the effectiveness of the movable XL-array in reducing channel correlations between Bobs and Willies. Additionally, compared with the RanM XL-array scheme, the proposed scheme yields a higher covert transmission rate, further demonstrating the advantage of proposed position optimization algorithm.
    Interestingly, for the RanM XL-array and fixed XL-array schemes, the achievable sum-rate remains largely unchanged as the maximum transmit power increases, because a higher transmit power used for signal transmission will violate the covertness constraint. Notably, the proposed scheme exhibits a trend similar to schemes without covertness constraint, indicating that position optimization significantly reduces channel correlations between Willies and Bobs.
    %Among all schemes accounting for the covertness constraint, the proposed scheme achieves the highest covert transmission rate. This is attributed to the effectiveness of the movable XL-array in reducing channel correlations between Bobs and Willies. Additionally, when compared to the RanM XL-array scheme, the proposed scheme yields a higher covert transmission rate, highlighting the effectiveness of the proposed algorithm in position optimization.
    %Next, for all schemes that do not consider the covertness constraint~\eqref{C:Covert}, a similar phenomenon is observed. 
    %Furthermore, the proposed scheme without considering covertness constraint achieves a rate performance close to its upper bound (i.e., both inter-user interference and the covertness constraint are neglected). These observations indicate that position optimization effectively reduces the channel correlations between mixed Bobs, thereby improving the achievable sum-rate compared with other schemes.
    %Interestingly, for the RanM XL-array and fixed XL-array schemes, the achievable sum-rate remains largely unchanged as the maximum transmit power increases, because a higher transmit power used for signal transmission will violate the covertness constraint. Notably, the proposed scheme exhibits a trend similar to the schemes without the covertness constraint~\eqref{C:Covert}, indicating that position optimization significantly reduces channel correlations between Willies and Bobs.
    \begin{figure}[t]
    	\centering
    	\includegraphics[width=0.35\textwidth]{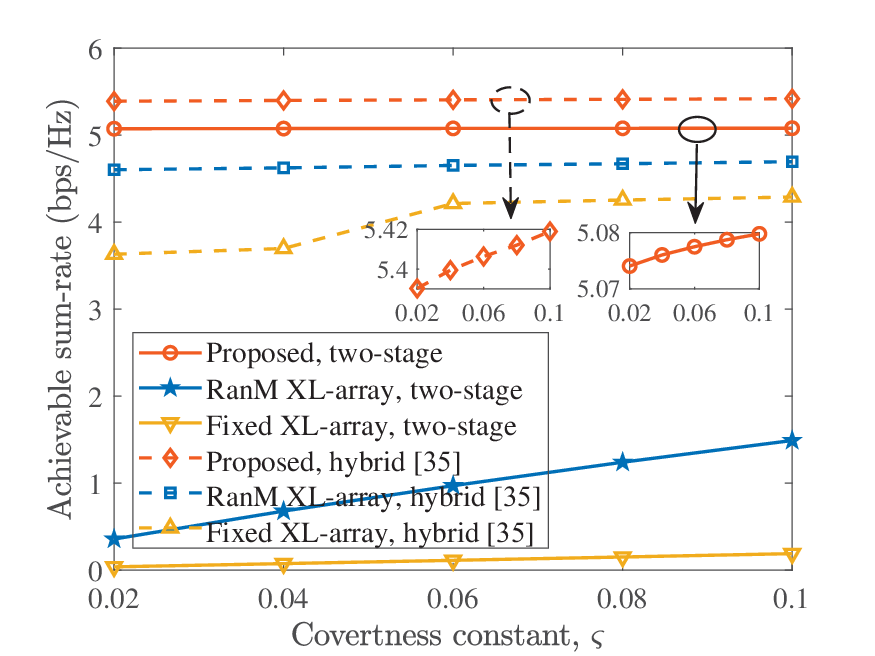}
    	\caption{Achievable sum-rate versus covertness constant.} \label{Fig:Sec5-Rate_vs_MDEP}
    	\vspace{-14pt}
    \end{figure}
    
	\begin{figure}[t]
		\centering
		\includegraphics[width=0.35\textwidth]{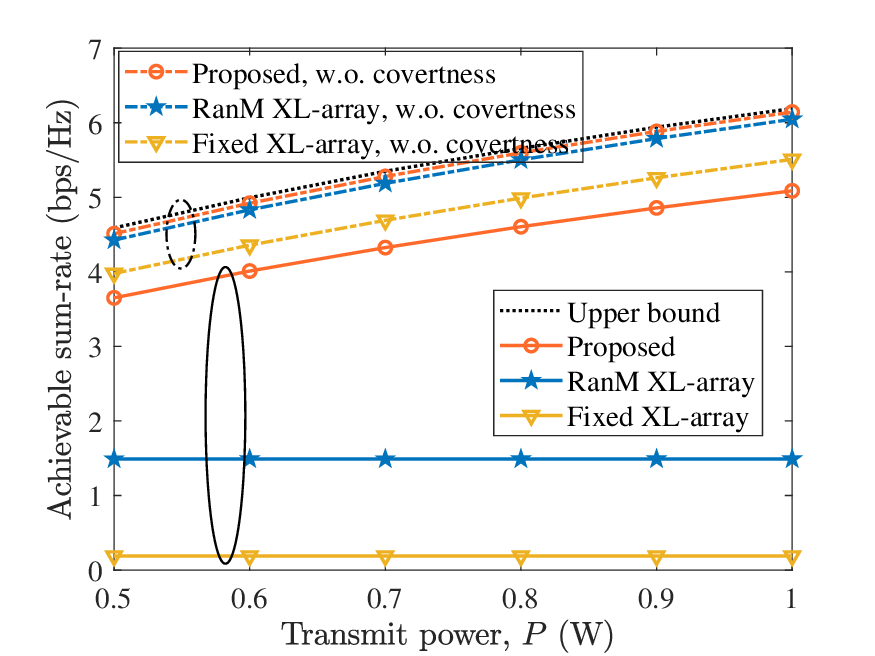}
		\caption{Achievable sum-rate versus maximum transmit power.} \label{Fig:Sec5-Rate_vs_P}
		\vspace{-14pt}
	\end{figure}

    \subsubsection{Effect of Number of Transmit Antennas}
	\begin{figure}[t]
		\centering
		\includegraphics[width=0.35\textwidth]{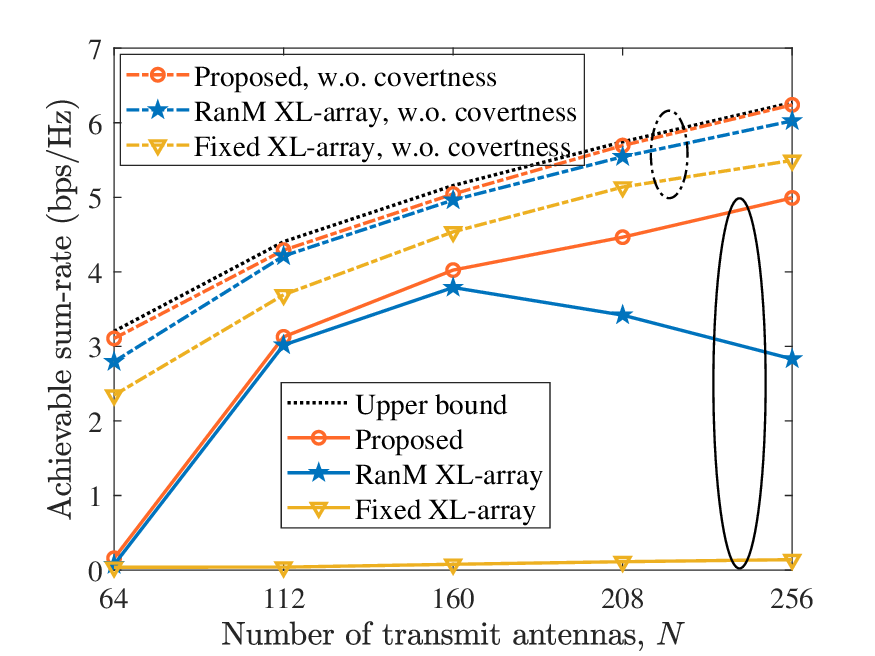}
		\caption{Achievable sum-rate versus number of antennas.} \label{Fig:Sec5-Rate_vs_N}
		\vspace{-14pt}
        \end{figure}
    In Fig.~\ref{Fig:Sec5-Rate_vs_N}, we show the performance of different schemes versus the number of XL-array antennas $(N)$. 
    It is observed that the achievable sum-rates of all schemes (except RanM XL-array) increase with the number of transmit antennas. However, when $N=64$, the covertness constraint significantly limits the achievable sum-rate, since all near-field channels become far-field ones when $N=64$, hence exhibiting high channel correlation at the same spatial angle. Moreover, as $N$ increases, the performance gap between the RanM XL-array and the proposed scheme becomes more pronounced due to the more significant mixed-field effect.
    The RanM XL-array scheme fails to effectively reduce strong channel correlation due to randomly generated position vectors.
    In contrast, the proposed scheme achieves a satisfactory covert transmission rate by utilizing the additional DoF offered by MA and effective position optimization by the proposed algorithm.
     
%    \subsubsection{Effect of allowable movement region}
%     \begin{figure}[t]
%     	\centering
%     	\includegraphics[width=0.35\textwidth]{Sec5-RatevsMoregion.eps}
%     	\caption{Achievable sum-rate versus movement region.} 
%     	\label{Fig:Sec5-RatevsMoregion}
%     	\vspace{-14pt}
%     \end{figure}
%     In Fig.~\ref{Fig:Sec5-RatevsMoregion}, we present the impact of the allowable movement region on the achievable sum-rate across all schemes.
%     It is observed that  all movable XL-array based schemes show improved rate performance as the normalized movement region increases.
%     Notably, the proposed movable XL-array scheme achieves more than 25 times the covert transmission rate of the fixed XL-array scheme.
%     This significant performance gain is attributed to the enlarged spatial DoFs provided by a larger movement region, which effectively reduces the channel correlation not only among mixed-field Bobs but also between Bobs and Willies.

    \subsubsection{Effect of imperfect CSI for Willies}
	\begin{figure}[t]
	\centering
	\includegraphics[width=0.35\textwidth]{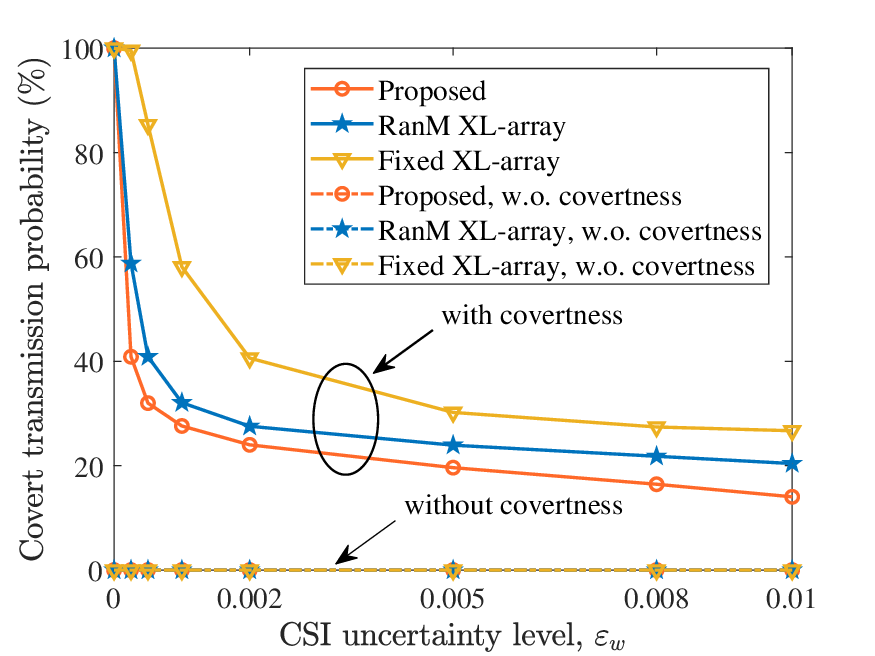}
	\caption{Covert transmission probability versus CSI error.} 
	\label{Fig:Sec5-RatevsCSIerror}
	\vspace{-14pt}
	\end{figure}
	Last, we investigate the impact of CSI errors for Willies on covert transmission probability. Specifically, the actual channel $\!\mathbf{h}_{{\rm W},w}^H \!$ is modeled~as
\begin{align}
	\mathbf{h}_{{\rm W},w}^H = \hat{\mathbf{h}}_{{\rm W},w}^H + \Delta  \mathbf{h}_{{\rm W},w}^H, w \in \mathcal{W},
\end{align}
where $\hat{\mathbf{h}}_{{\rm W},w}^H$ is the estimated channel used for designing the hybrid beamforming and the positions of movable subarrays. Additionally,  $\Delta \mathbf{h}_{{\rm W},w}^H $ is the channel estimation error, which is assumed to follow the distribution
%\begin{align}
$	\Delta  \mathbf{h}_{{\rm W},w} \sim \mathcal{CN}(\mathbf{0},\boldsymbol{\Sigma}_{w}),~ \boldsymbol{\Sigma}_{w} = \varepsilon_{w}^2 \|\hat{\mathbf{h}}_{{\rm W},w}\|_{2}^{2} \mathbf{I}_{N}$,
%\end{align}
where $\varepsilon_{w}\in [0,1]$ denotes the CSI uncertainty level~\cite{MyTWC}.

We present in Fig.~\ref{Fig:Sec5-RatevsCSIerror} the covert transmission probabilities (i.e., $\text{Pr}\{f_{{\rm Co},w} \le \tilde{\sigma}_{\rm w}^2 (\frac{\rho^{2\varsigma} - 1}{\rho}),\Delta  \mathbf{h}_{{\rm W},w} \!\sim\!\mathcal{CN}(\mathbf{0},\boldsymbol{\Sigma}_{w}), \forall w \in \mathcal{W}\}$) of different schemes versus the CSI uncertainty level for Willies. When $\varepsilon_{w}$ increases from $0$ to $0.01$, all schemes without considering the covertness constraint~\eqref{C:Covert} suffer a covert transmission probability of zero. In contrast, for other  schemes accounting for covertness, their covert transmission probability decreases with $\varepsilon_{w}$. These results highlight that, in scenarios with CSI errors, robust beamforming design is crucial to ensure covertness.

%\subsection{Power Allocation under Different Covertness Constraint}
%	
%	\begin{figure}[t]
%		\centering
%		\includegraphics[width=0.35\textwidth]{Sec5-PA.eps}
%		\caption{Power allocation versus covertness constant.} \label{Fig:Sec5-PA}
%		\vspace{-14pt}
%	\end{figure}
%
%      Last, we present in Fig.~\ref{Fig:Sec5-PA} the optimized power allocation for the movable XL-array and fixed XL-array schemes under different covertness constraints. It is observed that when $\varsigma = 0.02$, the transmit power is allocated to far-field Bob only, due to the strong correlation between the near-field Bob and Willie. In addition, for both movable XL-array and fixed XL-array schemes, only a small proportion of the transmit power is allocated for signal transmission due to the energy-spread effect and significantly stringent covertness requirements.
%      In contrast, for movable XL-array, as $\varsigma $ increases, all the transmit power ($P = 1$ W) is utilized for signal transmission, thereby achieving a higher achievable sum-rate. It is also noted that a small amount of power is allocated to near-field Bob, which suggests that the channel correlations between Willies and near-field Bob are significantly reduced. This reduction enables more efficient power allocation, leading to an increased achievable sum-rate. However, for fixed XL-array, the strong channel correlations between Bobs and Willies result in lower transmit power allocation, or even no power allocation.
	
	\vspace{-8pt}
       \section{Conclusions}\label{Sec:Con}
        In this paper, we proposed to deploy a \emph{movable XL-array} at the BS (Alice) to enhance mixed-field covert communication with multiple near- and far-field Bobs and near-field Willies. We revealed that the conventional fixed XL-array suffers degraded rate performance due to the energy-spread effect, which can be improved by movable XL-array. Moreover, we also characterized the covert transmission conditions for both near-field and far-field Bobs. 
        Based on these insights, we formulated an optimization problem to maximize the achievable sum-rate under covertness constraint. To solve this non-convex problem, we designed a two-layer optimization framework: an inner problem to optimize hybrid beamforming and an outer problem to optimize subarray movement. Numerical results validated that the proposed movable XL-array is capable of enhancing covert rate performance in mixed-field scenarios.

    \vspace{-6pt}
	\begin{appendices}
		\section{}\label{App:Modular_approx}
		Let   $ \Omega_{\theta} = \theta_{2} - \theta_{1} $, $ \Omega_{r} = \frac{1-\theta_{1}^2}{2 r_{1}} -\frac{1-\theta_{2}^2}{2 r_{2}} $.
		By substituting~\eqref{Exp:antenna_position} into~\eqref{Exp:Corfun}, we have
		\begin{align}\label{Exp:modular_expan}
			&\chi(\mathbf{q},\{\theta_1,r_1\},\{\theta_{2},r_{2}\})  \nonumber \\
			%=  &\frac{1}{N}  \Big| \sum_{m=1}^{M} \sum_{\tilde{n}=1}^{\tilde{N}}   e^{\jmath\frac{2\pi}{\lambda}({q}_{m} + \frac{2\tilde{n}-\tilde{N}-1}{2} d) \Omega_{\theta} + \jmath\frac{2\pi}{\lambda}({q}_{m} + \frac{2\tilde{n}-\tilde{N}-1}{2} d)^2 \Omega_{r} } \Big| \nonumber \\ 
			= & \frac{1}{N}  \Big| \sum_{m=1}^{M}  e^{\jmath\frac{2\pi}{\lambda} ({q}_{m} \Omega_{\theta} + {q}_{m}^2 \Omega_{r} )  } \sum_{\tilde{n}=1}^{\tilde{N}} \big( e^{\jmath\frac{2\pi}{\lambda} ( \frac{2\tilde{n}-\tilde{N}-1}{2} d  \Omega_{\theta}  ) }  \nonumber \\
			&\cdot e^{\jmath\frac{2\pi}{\lambda} (2 {q}_{m}  \frac{2\tilde{n}-\tilde{N}-1}{2} d \Omega_{r})  }
			\cdot e^{\jmath\frac{2\pi}{\lambda} (\frac{2\tilde{n}-\tilde{N}-1}{2} d)^2 \Omega_{r}  }
			\big)
			\Big|.
		\end{align}
		When $r_{1}$ and $r_{2}$ are larger than $\frac{2(\tilde{N}d)^2}{\lambda}$, the phase caused by the second-order term can be ignored. By denoting   $\Omega_{m} = 2 {q}_{m} \Omega_{r} + \Omega_{\theta} $, \eqref{Exp:modular_expan} can be approximated as
		\begin{align}
			& \frac{1}{N}\Big| \sum_{m=1}^{M} e^{\jmath\frac{2\pi}{\lambda} ({q}_{m} \Omega_{\theta} + {q}_{m}^2 \Omega_{r} )  } \sum_{\tilde{n}=1}^{\tilde{N}}  e^{\jmath\frac{2\pi}{\lambda} \frac{2\tilde{n}-\tilde{N}-1}{2} d  (\Omega_{\theta} + 2 {q}_m \Omega_{r}  ) }
			\Big| \nonumber \\
			=&  \frac{1}{N}\Big| \sum_{m=1}^{M} e^{\jmath\frac{2\pi}{\lambda} ({q}_{m} \Omega_{\theta} + {q}_{m}^2 \Omega_{r} )  } \frac{\sin(\frac{\tilde{N}\pi \Omega_{m}}{2})}{\sin(\frac{\pi \Omega_{m}}{2})}
			\Big|,
		\end{align}
		thus completing the proof.
		\vspace{-10pt}
		\section{}\label{App:CT_region}
		Based on Definition~\ref{Def:2}, the correlation $\chi_{{\rm W}_{1},{\rm B}_{1} } $ in~\eqref{Exp:CorW1B1_condition} can be approximated as  $\chi_{{\rm W}_{1},{\rm B}_{1} } \approx |G(\beta^{({\rm W}_1,{\rm B}_1)} )|$, where  $\beta^{({\rm W}_1,{\rm B}_1)}$ is given by
		%\begin{align}
		$	\beta^{({\rm W}_1,{\rm B}_1)} = \sqrt{\frac{N^2d^2 (1-\theta_{{\rm W},1}^2) }{2\lambda}\left|\frac{1}{r_{{\rm W},1}}-\frac{1}{r_{{\rm B},1}} \right| }$.
		%\end{align} 
		To ensure~\eqref{Exp:CorW1B1_condition}, the following inequality 
			$|G(\beta^{({\rm W}_1,{\rm B}_1)} )| \le | G( \beta_{{\Delta}_{1}(\epsilon)} )  |$
		should be guaranteed, where $ \beta_{\Delta} $ denotes a nonnegative constant that satisfies $|G(\beta_{\Delta})| = \Delta $. Considering that $|G(\beta)|$ shows a significant downward trend with minor fluctuations~\cite{Cui2022CE} as $\beta$ increases, 
		the covert transmission condition in~\eqref{Exp:CorW1B1_condition} can be approximated as $\beta^{({\rm W}_1,{\rm B}_1)} \ge \beta_{{\Delta}_{1}(\epsilon)}.$
		By denoting  $\Pi = \frac{2\lambda (\beta_{{\Delta}_{1}(\epsilon)})^2}{N^2d^2 (1-\theta_{{\rm W},1}^2)}$, the above inequality can be rewritten as $\{r_{{\rm B},1}\big| \Pi\le \big|  \frac{1}{r_{{\rm W},1}}-\frac{1}{r_{{\rm B},1}} \big|\}$, 
		thus completing the proof.
		
	\end{appendices}

	\bibliographystyle{IEEEtran}
	\bibliography{Ref_HFCovertcom.bib}
	
\end{document}